 \def\BState{\State\hskip-\ALG@thistlm}
\newcommand{\bsl}{\baselineskip}
\newcommand{\wh}{\widehat}
\newcommand{\T}{{\mbox{\scriptsize \sf T}}}
\newcommand{\R}{\mathbb{R}}
\newcommand{\bu}{\boldsymbol{u}}
\newcommand{\bv}{\boldsymbol{v}}
\newcommand{\bgamma}{\boldsymbol{\gamma}}
\newcommand{\balpha}{\boldsymbol{\alpha}}
\newcommand{\bbeta}{\boldsymbol{\beta}}
\newcommand{\bpsi}{\boldsymbol{\psi}}
\newcommand{\bsigma}{\boldsymbol{\sigma}}
\newcommand{\bxi}{\boldsymbol{\xi}}
\newcommand{\bphi}{\boldsymbol{\phi}}
\newcommand{\bz}{\boldsymbol{z}}
\newcommand{\btau}{\boldsymbol{\tau}}
\newcommand{\btheta}{\boldsymbol{\theta}}
\newcommand{\best}{\boldsymbol{\textup{e}}}
\newcommand{\bB}{\boldsymbol{\textup{B}}}
\newcommand{\bC}{\boldsymbol{\textup{C}}}
\newcommand{\bD}{\boldsymbol{\textup{D}}}
\title[Second order semi-parametric inference for log Gaussian Cox processes]{Second order semi-parametric inference for multivariate log Gaussian Cox processes}
\author{Kristian Bj{\o}rn Hessellund}
\address{Department of Mathematical Sciences, Aalborg University, Denmark}
 \author{Ganggang Xu}
 \address{Department of Management Science, University of Miami, USA}
 \author{Yongtao Guan}
 \address{Department of Management Science, University of Miami, USA}
 \author[Hessellund {\it et al.}]{Rasmus Waagepetersen}
 \address{Department of Mathematical Sciences, Aalborg University,
   Denmark}
 \email{rw@math.aau.dk}
\begin{document}

\maketitle

\begin{abstract}

This paper introduces a new approach to inferring the second order
properties of a multivariate log Gaussian Cox  process (LGCP) with
a complex intensity function. We assume a semi-parametric model for
the multivariate intensity function containing an unspecified complex
factor common to all types of points. Given this model we construct a second-order
conditional composite likelihood to infer the pair correlation and
cross pair correlation functions of the LGCP. Crucially this
likelihood does not depend on the unspecified part of the intensity
function. We also introduce a cross validation method for model
selection and an algorithm for regularized inference that can be used
to obtain sparse models for cross pair correlation functions. The
methodology is applied to simulated data as well as data examples from
microscopy and criminology. This shows how the new approach
outperforms existing alternatives where the intensity functions are
estimated non-parametrically.
\end{abstract}

\noindent {\bf Keywords:} case-control, composite likelihood, conditional likelihood, cross pair correlation function, multivariate, pair correlation function, point process.

\hyphenation{ex-am-ple hy-phen-a-tion short}
\hyphenation{long la-tex}
\hyphenation{gen-e-ra-li-ty}
\hyphenation{li-ke-li-ho-od-funk-ti-on}
\hyphenation{log-li-ke-li-ho-od-funk-ti-on}
\hyphenation{es-ti-ma-ti-ons-funk-ti-on-er} 
\section{Introduction}

A multivariate or multi-type point pattern is a marked point pattern
where the marks belong to a finite set corresponding to different
types of points. Equivalently, a multivariate point pattern can be
viewed as a finite collection of ordinary point patterns, where each
of these point patterns consists of points of a specific
type. In this paper we consider point pattern data from biology and
criminology. In the former case the point pattern represents locations
of different types of cells in a tumor and in the latter case, crime scenes of different types of crimes.
An obvious key point of interest is then to study possible
associations between the points of different types. 

If consistent estimates of the intensity functions are available, and
assuming second-order intensity reweighted
stationarity \citep{baddeley:moeller:waagepetersen:00} or
intensity-reweighted moment stationarity
\citep{lieshout:11}, an immediate approach is to compute non-parametric cross summary statistics
such as cross $K$, cross pair correlation, or cross $J$ functions \citep{moeller:waagepetersen:03,baddeley:jammalamadaka:nair:14,cronie:lieshout:16}. Parametric estimation of cross associations is
also possible. \cite{jalilian:etal:15}, \cite{Waage} and \cite{choiruddin:etal:19}
 used parametric models of intensity
and pair correlation functions, while
\cite{rajala:murrell:olhede:17} specified a full multivariate Markov
point process model.

In some cases it is not straightforward to estimate the intensity
function. For the cells data, the intensities
of each type appear to be very heterogeneous, possibly varying
within regions corresponding to different types of tissue. However,
it is not straightforward to delienate these regions. For the crime
data, the intensity functions depend in a complex manner on the urban
structures and the population density. 

In case of bivariate
case-control processes, \cite{diggle:etal:07} suggested  a
semi-parametric model where complex features of the case and control
intensity functions were captured by a common non-parametric factor. This
factor was estimated non-parametrically from the control point
process and next used in a
semi-parametric estimate of the intensity function for the
case process. Finally this estimate was plugged into an estimate of
the $K$-function for the case
process. This estimation method mitigates the problem of confounding
of clustering in the case process with variations in the case
intensity function. However, sensitivity to the choice of bandwidth for
the non-parametric estimation remains. Also the case and control
processes were assumed to be independent whereby the cross pair
correlation function is restricted to be one. \cite{diggle:etal:07}
assumed the control process to be Poisson and \cite{henrys:brown:09}
relaxed this assumption by allowing both case and
control processes to be clustered. They however retained independence between
the two processes. \cite{guan:waagepetersen:beale:08} used the same
framework as \cite{diggle:etal:07} but used a second-order conditional
composite likelihood to fit a parametric model to the case pair
correlation function. The composite likelihood notably did not depend
on the non-parametric part of the case intensity function and hence avoided
choosing a bandwidth for non-parametric estimation. In the same
case-control setting, \cite{xu:waagepetersen:guan:19} introduced an optimal
stochastic quasi-likelihood function for estimating the parametric
component of the intensity function for the cases.

In the context of multivariate point processes,
\cite{hessellund:etal:20} used a semi-parametric model for the
multivariate intensity function. They assumed a multiplicative structure where for
each type of points, the intensity function is a product of a common background intensity
and a log-linear factor modeling effects of covariates. Hence focus is on estimating
differences between the intensity functions (for different types of points) that
can be explained in terms of the
covariates. In the bivariate case this model coincides with the one
used in \cite{diggle:etal:07} and \cite{xu:waagepetersen:guan:19}. However, \cite{hessellund:etal:20} did
not impose any restrictive assumptions regarding the correlations within each type
of points or  between different types of points. While
the main focus in \cite{hessellund:etal:20} was inference for the
intensity function, they also obtained non-parametric estimates of
ratios of cross pair correlation functions. They were, however, not
able to obtain estimates of the individual cross pair correlation
functions. \cite{xu:etal:20} obtained a consistent
estimate of a common pair correlation function in presence of a common
completely unspecified intensity function but they considered the
particular multivariate setting of independent and
identically distributed point processes.

Our objective in this paper is to infer the full within and between
correlation structure of a multivariate point process. To do so we
adopt the parametric log Gaussian Cox process (LGCP) model for the correlation
structure proposed in \cite{Waage} and further considered in \cite{choiruddin:etal:19}.  This
model is flexible and has a very natural interpretation in terms of
latent structures. However, to deal with complex intensity functions, we replace the
parametric model for the intensity function used in
\cite{Waage} with the semi-parametric model for the
intensity function from \cite{hessellund:etal:20}. In this way we
combine the strengths of two modelling approaches. 

The presence of a non-parametric factor in the intensity function
means that ingenuity is needed for fitting the parametric part of
the model.
We generalize the approach for the bivariate case in
\cite{guan:waagepetersen:beale:08} and obtain a  second order conditional composite likelihood
function. This only depends on the parametric parts of the model and
 does not require knowledge of the non-parametric
component. Compared with \cite{guan:waagepetersen:beale:08}, we consider an arbitrary number of point processes and do not assume
that any of the point processes are Poisson, nor that any two point
processes are uncorrelated.

Some key questions we want to address for a particular data set are
whether some point processes are uncorrelated and if not, whether they
are negatively or positively correlated. We address these questions by
a model selection approach where the models considered represent
different types of correlation structures. Absence of
correlation between point processes requires that certain parameters
must be zero. To enable selection of models with parameters set to zero we combine our semi-parametric composite likelihood with a Lasso
penalization \citep{tibshirani}. This precisely facilitates that some parameters can be
estimated to be exactly zero. A similar approach was considered by \cite{choiruddin:etal:19} in the context of least squares estimation for a multivariate LGCP with a full parametric model for the multivariate intensity function.

The rest of the paper is organized as follows. Section~\ref{sec:model}  gives a brief overview of multivariate point processes with focus on the intensity functions and cross intensity functions. Next, the semi-parametric model for the intensity function and the multivariate LGCP model is described. Section~\ref{sec:esti} introduces the second order conditional composite likelihood function, an optimization algorithm based on the proximal Newton method, and a cross validation method for model selection. Section~\ref{sec:simu} contains simulation studies and Section~\ref{sec:dataanalysis} applies our methodology to cells and crimes data sets. Some concluding remarks are given in Section~\ref{sec:con}.

\section{Semi-parametric modelling of a multivariate point process}\label{sec:model}

\subsection{Background on intensity functions} \label{sec:back}
Let $\boldsymbol{X}=(X_1, \dots, X_p)$ be a multivariate spatial point process,
where $X_i$ is a spatial point process on $\R^d$ representing
points of type $i$, $i=1,\ldots,p$. Each $X_i$ is hence a random subset of $\R^d$ such that the cardinality of $X_i
\cap B$ is finite almost surely for any bounded $B \subset \R^d$.  
We observe $\boldsymbol{X}$ in a spatial window $W$, where the
window $W \subset \R^d$ is bounded with area $|W|>0$. We will assume
there exist for each $i,j=1,\dots,p$, non-negative functions
$\rho_{i}(\cdot)$ and $\rho_{ij}(\cdot)$ so that the so-called
Campbell's formulae:
\begin{align} \label{eq:campbell}
\textup{E} \sum_{\bu \in X_i} h_1(\bu) &= \int h_1(\bu)
                                        \rho_{i} ( \bu ) \textup{d}\bu \\
\textup{E} \sum_{\bu \in X_i,\bv \in X_j}^{\neq}  h_2(\bu,\bv) &=
   \int h_2(\bu,\bv) \rho_{ij}(\bu,\bv) \textup{d}\bu \textup{d}\bv, \label{eq:campbell2}
\end{align}
hold for any non-negative functions $h_1(\cdot)$ and $h_2(\cdot,\cdot)$.
Here $\sum^{\neq}$ means sum over pairwise distinct pairs $(\bu,\bv)$. The function $\rho_{i}(\cdot)$ is
called the intensity function of $X_i$. If $i=j$, then
$\rho_{ii}(\cdot)$ is called the second order intensity function of
$X_i$, while if $i \neq j$, $\rho_{ij}(\cdot)$ is called the
cross intensity function between $X_i$ and $X_j$.
The normalized cross intensity function, called cross pair correlation function (cross PCF), is denoted $g_{ij}(\cdot)$ and defined by: 
$\rho_i(\bu) \rho_j(\bv) g_{ij}(\bu,\bv)=\rho_{ij}\left(\bu,\bv \right)$.
If $i=j$ we just call $g_{ii}(\cdot)$ the pair correlation function
(PCF) for $X_i$. If $X_i$ and $X_j$ are independent, then
$g_{ij}(\bu,\bv)=1$. The case $g_{ij}(\bu,\bv)>1$ ($<1$) is indicative of positive (negative)
association between $X_i$ and $X_j$ (or between points in $X_i$ in the
case $i=j$). Hence the cross PCFs provide useful insight regarding the
dependence within and between the point processes. 
We assume that $X$ is second order
cross intensity reweighted stationary and isotropic, i.e., with an
abuse of notation, $g_{ij}(\bu,\bv)=g_{ij}(r)$, $i,j=1,\ldots,p$, where $r=||\bu-\bv||$.


\subsection{Semi-parametric regression model for the intensity} \label{sec:semi}

It may sometimes be difficult to specify a simple parametric
model for the intensity functions. One may then resort to
non-parametric estimation of the intensity functions. The
  results then
depend heavily on the choice of smoothing bandwidth where
different data driven methods may result in very different results,
 see e.g. simulation studies in \cite{cronie:lieshout:18} and \cite{shaw:moeller:waagepetersen:20}.
We instead consider a
semi-parametric model where a background intensity function
$\rho_{0}(\cdot)$ captures complex variation in the intensity functions
that is common to all the point processes $X_1,\ldots,X_p$. For the
cells data considered in Section~\ref{sec:cells}, $\rho_0(\cdot)$ may
capture variations in tissue composition that influence occurrence of
different types of cells. For the crime data in
Section~\ref{sec:crimes}, $\rho_0(\cdot)$ captures variation in
population density and dependence of the intensities on the urban
structure.
More specifically, following \cite{hessellund:etal:20}, we
consider the multiplicative model:
\begin{align} \label{eq:multi1}
\rho_{i}(\bu; \bgamma_i)=\rho_{0}(\bu) \exp(\bgamma_i^\T \bz(\bu) )
\end{align} 
for the intensity of $X_i$, where $\bz(\bu)$ denotes a vector of spatial covariates at location $\bu$ and $\bgamma_i$ is  a
regression parameter vector. Let
  $\rho^{\text{pooled}}=\sum_{i=1}^p \rho_i$ denote the intensity of
  the pooled point process  $X^{\text{pooled}}=\cup_{i=1}^p X_i$. The
  intensity function $\rho_i$ 
  can then be decomposed in a natural way as $
\rho^{\text{pooled}}(\cdot)\textup{p}_i(\cdot;\bgamma_i)$  where $\textup{p}_i (\bu;\bgamma_i) =  \rho_i(\bu;\bgamma)/\rho^{\text{pooled}}(\bu)$
is the conditional probability that a point $\bu$ is of type $i$ given
that $\bu \in X^{\text{pooled}}$. 

The parameters $\bgamma_i$ are not identifiable: replacing the $l$th entry
  $\gamma_{il}$ in $\bgamma_{i}$ by $\gamma_{il}-K$ for $i=1,\ldots,p$ while replacing
  $\rho_0(\cdot)$ by $\rho_0(\cdot)\exp(K z_l(\bu))$ does not change the
  model when $\rho_0(\cdot)$ is unspecified.
\cite{hessellund:etal:20} proposed a first order conditional composite
likelihood for estimating
contrasts $\bbeta_i=\bgamma_i-\bgamma_p$.
The first order conditional composite likelihood was
obtained as the product $\prod_{i=1}^p \prod_{\bu \in X_i}
\textup{p}_i(\bu;\bbeta_i)$ with the constraint
$\bbeta_p=\boldsymbol{0}$. Alternatively, one could impose sum-to-zero
constraints $\sum_{i=1}^p \beta_{il}=0$ on the $\bbeta_i$.


Given the semi-parametric model for the intensity functions and its
associated estimation procedure we specify in the next section a model
for the correlation structure of the multivariate point process.


\subsection{Multivariate log Gaussian Cox model}\label{sec:mlgcp}

Following the setup in \citet{Waage}, we assume that $X_i$ for
$i=1,\dots, p$, is a Cox process with random intensity function:
\begin{align} \label{eq:lgcpint}
\Lambda_i(\bu)= \rho_0 (\bu) \exp(\bgamma_i^\T \bz(\bu))\textup{exp}\left(\mu_i+ \sum_{k=1}^q\alpha_{ik} Y_k(\bu)+ \sigma_i U_i(\bu) \right),
\end{align} 
where the $Y_k$ and $U_i$ are  independent zero mean unit
variance Gaussian random fields and $\mu_i= -
\sum_{k=1}^q\alpha_{ik}^2/2 - \sigma_i^2/2$.
We interpret the $Y_k$ as latent random factors that influence
all types of points. Hence the different types of points may be
correlated due to dependence on the $Y_k$. Moreover, each $U_i$ is a type-specific random factor that only
affects the $i$th type of points. Hence $U_i$ models random
clustering within each $X_i$. 

Consider for a moment the ideal situation where the $Y_k$ are
  observed (non-random). Following the same considerations as for the $\bgamma_i$ in the
  previous section we should then impose restrictions $\alpha_{pl}=0$ or $\sum_{i=1}^p \alpha_{il}=0$,  $l=1,\ldots,q$,  in order to ensure
  identifiability. In case of unobserved $Y_k$ and hence less
  information, the need for a constraint is not less pertinent.
  In the following we impose the sum-to-zero constraint
  $\sum_{i=1}^p \alpha_{il}=0$, $l=1,\ldots,q$ which treats all $X_i$
  symmetrically.   

The intensity function  of $X_i$ 
is
$\rho_i(\bu) =\textup{E} [\Lambda_i(\bu)]
= \rho_0(\bu) \exp(\bgamma_i^\T \bz(\bu)),$
which follows from the moment generating function of a Gaussian random variable. Similarly,
\begin{align*}
\rho_{ij}(\bu,\bv)= \textup{E} [\Lambda_i(\bu)\Lambda_j(\bv)] = &  \rho_0(\bu)  \rho_0(\bv) \exp(\bgamma_i^\T \bz(\bu))\exp(\bgamma_j^\T \bz(\bu)) \\
& \times \textup{exp}\left(\sum_{k=1}^q \alpha_{ik} \alpha_{jk} c_{Y_k}(\bu,\bv) + 1[i=j] \sigma_i^2 c_{U_i}(\bu,\bv) \right),
\end{align*}
where $c_{Y_k}(\bu,\bv)=\textup{Corr}[Y_k(\bu),Y_k(\bv)]$ and
$c_{U_i}(\bu,\bv)=\textup{Corr}[U_i(\bu),U_i(\bv)]$.

For $c_{Y_k}(\cdot)$ and $c_{U_i}(\cdot)$ we use exponential correlation
functions, i.e. $c_{Y_k}(\bu,\bv)=\textup{exp}(-||\bu-\bv||/\xi_k)$
and $c_{U_i}(\bu,\bv)=\textup{exp}(-||\bu-\bv||/\phi_i)$ with correlation scale parameters $\xi_k$
and $\phi_i$. Other
parametric correlation models might of course be used instead, depending on the
application. Denote by $\btheta$ the concatenation of
$\balpha_{\cdot k}=(\alpha_{1k},\ldots,\alpha_{pk})^\T$, $k=1,\ldots,q$, $\bxi=(\xi_1,
\dots, \xi_q)^\T$, $\bsigma^2=(\sigma_1^2,\ldots,\sigma_p^2)^\T$, and $\bphi=(\phi_1,\ldots,\phi_p)^\T$.
The cross PCF between $X_i$ and $X_j$ is then given by the parametric model:
\begin{align} \label{eq:crosspcf1}
g_{ij}(r;\btheta)=\textup{exp}\Big ( \sum_{k=1}^q \alpha_{ik}
  \alpha_{jk} \textup{exp}(-r/\xi_k)+ 1[i=j]\sigma_i^2
  \textup{exp}(-r/\phi_i) \Big ).
\end{align}
If $\sum_{k=1}^q\alpha_{ik} \alpha_{jk}\textup{exp}(-r/\xi_k )$ is greater (smaller) than
$0$, this implies positive (negative) spatial correlation between
points from $X_i$ and $X_j$ at the lag $r$. If for example
$\alpha_{ik}\alpha_{jk}=0$ for all $k=1,\ldots,q$, 
then $X_i$ and $X_j$ are
independent.

The number $q$ of latent common fields controls the complexity of the
model and will be chosen according to a cross validation criterion
detailed in Section~\ref{sec:cvlambda}. In \cite{Waage} and
\cite{choiruddin:etal:19}, estimation of $\btheta$ for a chosen $q$ was based on a least squares criterion where non-parametric
estimates of the pair correlation function acted as `dependent'
variables. These non-parametric estimates were based on fully
specified regression models for the log intensity
functions. This is not possible in our setting due to the
  presence of $\rho_0(\cdot)$. 
 Section~\ref{sec:esti} therefore introduces a second order
conditional composite likelihood function for estimation of $\btheta$ 
that does not require knowledge of $\rho_0(\cdot)$.

\section{Second order conditional composite likelihood} \label{sec:esti}

We assume initially that the intensity function regression
  parameters $\bbeta_i$ are known and thus suppress dependence on
these in the notation. Recall also that $\btheta$ consists of
  the parameters for the PCFs and cross PCFs. The idea is to condition on the union  of all points regardless of type
and for each $\bu \neq \bv \in \cup_{i=1}^p X_i$ consider the
conditional probability (see Section~1 in the supplementary material) that $\bu$ is of type $i$ and
$\bv$ is of type $j$:
\begin{align} \label{eq:proba}
\textup{p}_{ij}(\bu,\bv; \btheta) = \frac{\rho_{ij}(\bu,\bv)}{\sum_{k,l} \rho_{kl}(\bu,\bv)} = \frac{f_i(\bu) f_j(\bv) g_{ij}(r; \btheta_{ij})}{\sum_{k,l} f_k(\bu) f_l(\bv) g_{kl}(r; \btheta_{kl})},
\end{align}
where $f_i(\bu)=\textup{exp}(\bbeta_i^\T\bz(\bu))$, $i=1,\ldots,p$.
 Note that $\rho_0(\bu)\rho_0(\bv)$ cancels out in
\eqref{eq:proba} so that the probabilities do not depend on the unspecified $\rho_0(\cdot)$. We
then estimate $\btheta$ by maximizing the second order
conditional composite  likelihood function given by: 
\begin{align} \label{eq:lik}
L(\btheta)=  \prod_{i,j} \prod_{\bu \in X_i \cap W, \bv \in
  X_j \cap W}^{\neq} 1_R[\bu,\bv] \textup{p}_{ij}(\bu,\bv; \btheta)
   ,
\end{align}
where $1_R[\bu,\bv] = 1[\|\bu-\bv\| \leq R]$ and $R > 0$ is a user-specified tuning parameter. Specifying an
$R<\infty$ is useful for reducing computing time. Moreover, omitting pairs of points $\bu$ and $\bv$
that are distant from each other can improve
 statistical efficiency since such pairs do not provide much
information on the correlation structure.  As a rule of thumb, $R$
should be chosen so that $g_{ij}(r;\btheta) \approx 1$ for
$||\bu-\bv||>R$.  Methods for choosing $R$ are discussed in
\cite{lavancier:poinas:waagepetersen:19}.

The cross PCFs \eqref{eq:crosspcf1} and hence the second order
conditional composite likelihood function are invariant to
simultaneous interchange of columns $\balpha_{\cdot k}=(\alpha_{ik})_k$ and corresponding
correlation scale parameters $\xi_k$, as well as to multiplication by $-1$ of
$\balpha_{\cdot k}$. This lack of identifiability is  not of much
concern from a theoretical point of view
since we are not interested in the individual $\alpha_{ij}$'s but
rather the resulting correlation structure which is invariant to the
aforementioned transformations. From a practical point of view
  one might be worried about optimization convergence problems, e.g.\  the
  optimization algorithm (Section~\ref{sec:estalg}) jumping
  from one equivalent mode to another. However, in our simulation studies and data
  examples, for any particular set of initial values, the convergence
  to one particular equivalent mode was very stable.

Following the idea of two-step estimation in
\cite{waagepetersen:guan:09}, we replace the parameters $\bbeta_i$ by
consistent estimates $\hat \bbeta_i$ obtained using the first order
conditional composite likelihood  \citep{hessellund:etal:20} described
in Section~\ref{sec:semi}.

\subsection{Optimization} \label{sec:estalg}
We denote by $l_-(\btheta)$ the negation of the log of \eqref{eq:lik} and turn
the estimation of $\btheta$ into a minimization problem. In order to
minimize $l_-(\btheta)$ with respect to $\btheta$, we consider a
cyclical block descent algorithm. The strategy is to update
$\balpha$, $\bxi$, $\bsigma^2$ and $\bphi$ in turn until a convergence
criterion is met. In the following we will, with  a convenient abuse of notation, use $\balpha$ to denote both the matrix $[ \alpha_{ij}]_{ij}$ and the vectorized version where the matrix is laid out column-wise $(\balpha_{\cdot1}^\T,\dots,\balpha_{\cdot q}^\T)^\T$. It will be clear from the context which interpretation of $\balpha$ is relevant. Denote by $\btheta^{(n)}=((\balpha^{(n)})^\T,(\bxi^{(n)})^\T,(\bsigma^{2(n)})^\T,(\bphi^{(n)})^\T)^\T$
the current value of $\btheta$. We update each parameter using a
quasi Newton-Raphson iteration with additional line search. This is
equivalent to minimizing a certain least squares problem. We give the details of this since this is also needed for solving a
regularized version of our estimation problem, see
Section~\ref{sec:lasso}. 

We denote by $\tilde \btheta$ a temporary parameter vector that keeps track of
the updates leading from $\btheta^{(n)}$ to $\btheta^{(n+1)}$ and
initialize $\tilde \btheta=\btheta^{(n)}$. Denote by  $\tilde \btau
\in \{\tilde \balpha, \tilde \bxi,\tilde \bsigma^2,\tilde \bphi \}$ the
parameter vector to be updated and by $\tilde \btheta(\btau)$ the vector
obtained by replacing  $\tilde \btau$ in $\tilde \btheta$ by $\btau$. Consider a quadratic  approximation of $l_-(\tilde \btheta(\btau))$ with
respect to $\btau$ around $\tilde \btheta$:
\begin{align} \label{eq:taylor}
q(\btau) & = l_-(\tilde \btheta) + (\btau - \tilde \btau)^\T\best(\tilde \btau) + \frac{1}{2}(\btau - \tilde \btau)^\T \textup{H} (\tilde \btau)(\btau - \tilde \btau).
\end{align}
 Here (omitting for convenience the arguments $\bu,\bv$)
\begin{align*} 
\best(\btau)  =\nabla_{\btau}l_-(\tilde \btheta(\btau))= \sum_{i , j}
  \sum_{\substack{\bu \in X_i \cap W \\ \bv \in X_j \cap W}}^{\neq} 1_R \left( \frac{\sum_{k , l} \nabla_{\btau}
  \rho_{kl}(\tilde \btheta(\btau))}{\sum_{k , l}
  \rho_{kl}(\tilde \btheta(\btau))} - \frac{\nabla_{\btau}
  \rho_{ij}(\tilde \btheta(\btau))}{\rho_{ij}(\tilde \btheta(\btau))} \right)
\end{align*}
is the gradient with respect to $\btau$ and 
\begin{align*}
\textup{H}(\btau)=\textup{E} [\nabla^2_{\btau} l_-(\tilde
  \btheta(\btau))] = \int_{W^2} 1_R \textup{Cov}
  \left(Z(\tilde \btheta(\btau)) \right) \sum_{i , j}
  \rho_{ij}(\tilde \btheta(\btau)) \textup{d} \bu \textup{d} \bv
\end{align*}
is the expected Hessian with respect to $\btau$. Here $Z(\bu,\bv,\tilde \btheta(\btau)) $
denotes a random vector which takes values $\nabla_{\btau}
\textup{log}(\rho_{ij}(\bu,\bv;\tilde \btheta(\btau)))$ with probabilities
$\textup{p}_{ij}(\bu,\bv;\tilde \btheta(\btau))$ (Lemma~2.2 in the supplementary material). We estimate
$\textup{H}(\tilde \btau)$  by 
\begin{align*}
\hat{\textup{H}}(\tilde \btau)=\sum_{i , j} \sum_{\bu \in X_i \cap W, \bv \in X_j \cap W }^{\neq} 1_R[\bu,\bv] \textup{Cov}
  \left(Z(\bu,\bv,\tilde \btheta) \right),
\end{align*}
which is unbiased by \eqref{eq:campbell2}.
Since
$\hat{\textup{H}}(\tilde \btau)$ is a symmetric, positive semi-definite
matrix, the eigendecomposition implies that
$\hat{\textup{H}}(\tilde \btau)^{1/2}=UD^{1/2}U^\T$, where $D$ is the
diagonal matrix of the (all non-negative) eigen values of
$\hat{\textup{H}}(\tilde \btau)$ and $U$ is the matrix of eigen
vectors. Assuming that all the eigen values are positive,
following Section~3 in the supplementary material, the minimizer $\hat \btau$ of
\eqref{eq:taylor} is a solution of a least squares problem:
\begin{align} \label{eq:minfct2}
\hat{\btau} = \textup{arg min}_{\btau} \bigg( \frac{1}{2} ||Y
  -X\btau ||^2 \bigg) = \left(X^\T X \right)^{-1} X^\T Y,
\end{align}
where
$Y=\hat{\textup{H}}(\tilde \btau)^{1/2}\left( -\hat{\textup{H}}(\tilde \btau)^{-1}\best(\tilde \btau)+\tilde \btau \right)$
and
$X=\hat{\textup{H}}(\tilde \btau)^{1/2}$.
Introducing a line search, we update
$\btau^{(n+1)}=\tilde \btau+t(\hat{\btau}-\tilde \btau)$, for some
$t>0$ and also update $\tilde \btheta$ by replacing $\tilde \btau = \btau^{(n)}$ by
$\btau^{(n+1)}$. When all components of $\tilde \btheta$ have been
updated, we let $\btheta^{(n+1)}=\tilde \btheta$.

 As mentioned in Section~\ref{sec:mlgcp}, we impose a sum to
zero constraint on each $\balpha_{\cdot k}$, i.e. $\sum_{i=1}^p
\alpha_{ik}=0$, $k=1,\dots,q$. The constraint is easily accommodated by the
change of variable $\bB \bpsi= \balpha$, where $\bpsi$ is a $(p-1) \times q$ matrix and $\bB^\T =[ \boldsymbol{\textup{I}}_{p-1} \textup{ -}\boldsymbol{1}]$
is a $(p-1) \times p$ matrix with $\boldsymbol{\textup{I}}_{p-1}$  the $(p-1)\times (p-1)$ identity matrix and -$\boldsymbol{1} = [\textup{-}1,\dots,  \textup{-}1]^\T \in \R^p$.
Under the sum to zero constraint, the relation between $\balpha$ and
$\bpsi$ is one-to-one. Thus in case of $\btau=\balpha$, we update
the unconstrained parameter $\bpsi$ using (by the chain rule) the
gradient $\bB^\T \best(\tilde \btheta)$ and the Hessian
$\bB^\T \textup{H}(\tilde \btheta) \bB$, and finally let $\balpha^{(n+1)}=\bB
\bpsi^{(n+1)}$. 

The cyclical block updating is iterated until relative function
convergence,
\begin{align} \label{eq:relconv}
\left\vert \left[ l_- \left( \btheta^{(n+1)} \right) -l_-\left( \btheta^{(n)} \right) \right]/l_-\left( \btheta^{(n)} \right) \right\vert < \epsilon,
\end{align}
for some $\epsilon > 0$ in which case we set $\hat \btheta=\btheta^{(n+1)}$. Algorithm \ref{fig:alg1} gives a brief overview of the cyclical block  descent algorithm.
\begin{algorithm}
\caption{Cyclical block descent algorithm}\label{fig:alg1}
\begin{algorithmic}[1]
\BState Simulate initial parameters $\hat{\bpsi}^{(0)}$,
$\hat{\bxi}^{(0)}$, $\hat{\bsigma}^{2(0)}$ and $\hat{\bphi}^{(0)}$
\BState $n:=0$
\BState $\boldsymbol{repeat}$
\BState \quad $\tilde \bpsi := \bpsi^{(n)}$,  $\tilde \bxi :=
\bxi^{(n)}$, $\tilde \bsigma^2 := \bsigma^{2(n)}$ and
$\tilde \bphi := \bphi^{(n)}$
\BState \quad update $\tilde \bpsi$, $\tilde \bxi$, $\tilde
\bsigma^2$ and $\tilde \bphi$ in turn using \eqref{eq:minfct2}
combined with line search
\BState \quad $\bpsi^{(n+1)}:= \tilde \bpsi$,
$\bxi^{(n+1)}:=\tilde \bxi$, $\bsigma^{2(n+1)} := \tilde
\bsigma^2$,  $\bphi^{(n+1)} := \tilde \bphi$, and $\balpha^{(n+1)}:=B \tilde \bpsi^{(n+1)}$
\BState \quad $n:=n+1$
\BState $\boldsymbol{until}$ relative convergence criterion \eqref{eq:relconv}
\BState $\boldsymbol{return}$ $\hat \btheta = \btheta^{(n)}$
\end{algorithmic}

\end{algorithm}
\subsection{Optimization with lasso regularization}\label{sec:lasso}

The overall model complexity is controlled by the
number $q$ of latent fields $Y_j$ with associated parameters $\alpha_{ij}$,
$i=1,\ldots,p$, $j=1,\ldots,q$. Nevertheless, for any $q$, more sparse
submodels could be obtained by restricting some $\alpha_{ij}$ to zero. Of course, if all entries in a
column $\balpha_{\cdot k}$ are restricted to zero, this just
corresponds to reducing $q$ by one. In order to look for sparse submodels
for a given $q$, we extend the estimation approach by introducing a
lasso regularization on $\balpha$.  
We express the sum to zero constraint for $\balpha$ by $\bC \balpha = \boldsymbol{0}$, where $\bC = [\bD_1 \cdots \bD_q]$
is a $q \times pq$ matrix that consists of submatrices $\bD_i$, $i=1,\dots q$, of dimension $q \times p$. Each submatrix $\bD_i$ consists of ones on the $i$th row and zeros otherwise. Here $\balpha$ should be interpreted as the vector obtained by concatenating the $\balpha_{\cdot k}$, cf.\ Section~\ref{sec:estalg}. Note that the regularization is not relevant in the bivariate case $p=2$ since in this case, by the sum to zero constraint,  $\alpha_{1k}=0$ implies $\alpha_{2k}=0$ which just corresponds to reducing $q$ by 1.

The regularized object function becomes:
\begin{equation}\label{eq:penalize}
l_-(\btheta) + \lambda \sum_{i=1}^{p}\sum_{j=1}^q | \alpha_{ij} |, \quad \bC \balpha= \boldsymbol{0},
\end{equation}
where $\lambda \sum_{i=1}^{p}\sum_{j=1}^q | \alpha_{ij} |$ is a
  lasso penalty that can lead to exact zero components in the estimate
of $\balpha$. 
We minimize this using a cyclical block descent algorithm which only differs from the one in Section~\ref{sec:estalg} by the update 
$\hat \balpha = \textup{arg min}_{\balpha} \Big( \frac{1}{2} ||Y
  -X\balpha ||^2  + \lambda \sum_{i=1}^{p}\sum_{j=1}^q| \alpha_{ij} |
  \Big ) \textup{ subject to } \bC \balpha= \boldsymbol{0}.$
To compute $\hat \balpha$ under the sum to zero constraint, we
use the augmented Lagrangian algorithm suggested in
\cite{shi:etal:16}. Details are given in
Section~4 in the supplementary material. In Section \ref{sec:cvlambda} we propose a cross validation procedure to choose $\lambda$.

The suggested algorithm can easily be extended to handle elastic net
regularization \citep{zou:hastie:05} that combines lasso with ridge regression. However,
this introduces yet a tuning parameter controlling the convex
combination of lasso and ridge regression. Also our main focus is
sparsity as provided by the lasso. For simplicity of
exposition, we therefore focus here on the lasso.

\subsection{Determination of $q$ and $\lambda$ } \label{sec:cvlambda}

We choose the values of $q$ and $\lambda$ according to a
$K$-fold ($K \ge 2$) cross validation criterion constructed so
that it targets selection of an appropriate cross correlation
structure.  Let for each $i,j$, $M_{ij}$ denote the set of pairs
$(\bu,\bv)$ with $\bu \in X_i$, $\bv \in
X_j$, and $0<\|\bu - \bv\|\le R$. We randomly split $M_{ij}$ into $K$ equally sized subsets
$M_{ij,1}, \ldots M_{ij,K}$. 
We then obtain
for each $k=1,\ldots,K$, a parameter estimate $\hat \btheta_k$ by
maximizing the regularized
conditional composite likelihood 
\[ 
l_k(\btheta)=  \sum_{i,j} \sum_{(\bu,\bv) \in M_{ij,-k}}^{\neq} \log \textup{p}_{ij}(\bu,\bv; \btheta)+ \lambda \sum_{i=1}^{p}\sum_{j=1}^q| \alpha_{ij} |, \quad \bC \balpha= \boldsymbol{0},
\]
for the training data set consisting
of the set of pairs $M_{ij,-k}= \cup_{l \neq k} M_{ij,l}$ where
  $M_{ij,k}$ is left out. 
The $k$th cross validation score based on the validation sets $M_{ij,k}$, $i \neq j$, is then
\[
CV_k(q,\lambda)=  \sum_{i \neq j} \sum_{(\bu,\bv) \in M_{ij,k}} \log
  \textup{p}_{ij}(\bu,\bv; \hat \btheta_k).
\]
We here omit the $M_{ii,k}$ to focus the cross validation on the fit of
the cross correlation structure. 
To reduce the sensitivity to Monte Carlo variation, one may compute
cross validation scores $CV_{kl}(q,\lambda)$, $l=1,\ldots,L$, based on
$L$ independent $K$-fold random splits of the data, and use the average
$\overline{CV}(q,\lambda)$ of the $CV_{kl}(q,\lambda)$,
$k=1,\ldots,K$, $l=1,\ldots,L$.
We do not have a theoretical foundation for choosing $K$. We
  hence follow the rule of thumb in the statistical learning literature
  \citep{hastie:tibshirani:friedman:13} and choose $K$ in the range of
  5 to 10. Regarding $L$, a very large $L$ can essentially eliminate
  the Monte Carlo variation due to the random splitting of the data
  into $K$ folds. However, a moderate $L$ is often used in practice
  to avoid excessive computing times. For example, we use $K=5$ and $L=10$ in
  our simulation studies in Section~\ref{sec:simu}.

 Consider the case $\lambda=0$ which is relevant for example when $p=2$. The most obvious choice of $q$ is the one that minimizes the cross
validation score, $q_{\min}=\textup{arg min}_{q} \overline{CV}(q,0)$.
We denote this the minimum (MIN) rule.
However, due to sensitivity to Monte Carlo error, a so-called one
standard error rule has  been proposed
\citep{hastie:tibshirani:friedman:13} that promotes more sparse
solutions. Let $\text{SD}(q,0)$ denote the standard deviation of a
cross validation score $CV_{kl}(q,0)$ obtained from a single
validation set. In the current framework, the one standard error
(1-SE) rule selects the smallest $q$ (denoted $q_{\text{1-SE}}$) for which 
$\overline{CV}(q,0) \leq \overline{CV}(q_{\min},0) + \text{SE}(q_{\min},0),$
where 
$\text{SE}(q,0)=\text{SD}(q_{\min},0)/\sqrt{KL}$ is the standard error
of  $\overline{CV}(q_{\min},0)$.

For joint selection of $(q,\lambda)$, the immediate choice would be the minimizer of $\overline{CV}(q,\lambda)$. However, computing $\overline{CV}(q,\lambda)$ over a two-dimensional grid of $q$ and $\lambda$ values is very time consuming. Instead we use a two-step approach where we first determine $q_{\min}$ as in the previous paragraph and next choose the $\lambda$ that minimizes $\overline{CV}(q_{\min},\lambda)$ over values of $\lambda$. Thus the initial selection of $q$ determines the overall model complexity while the subsequent possible selection of a $\lambda>0$ may introduce additional sparsity given $q_{\min}$.

\subsection{Model assessment}\label{sec:modelassess}

Assuming the model \eqref{eq:multi1} for the intensity functions,
\cite{hessellund:etal:20} obtained a consistent non-parametric
estimate of any ratio $g_{ij}(r)/g_{lk}(r)$ of cross PCFs, $r > 0$,
$i,j,l,k=1,\ldots,p$. Similarly, we can obtain semi-parametric
estimates of these ratios based on our semi-parametric estimates of
the cross PCFs. If the assumed multivariate LGCP model is valid, the
non-parametric  and the semi-parametric estimates of cross PCF ratios
should not differ much. In our data examples in
Section~\ref{sec:dataanalysis} we informally assess the models by
visual comparison of the two types of estimates. We also conduct a
so-called global envelope goodness-of-fit test
\citep{myllimakki:etal:16} based on the difference between the two
types of estimates over spatial lags $r \in [0,R]$. This requires
simulation under a null model. For this we use the fitted multivariate
LGCP where we replace the unknown background intensity $\rho_0$ by a
non-parametric estimate introduced in \cite{hessellund:etal:20}, see
also Section~5 in the supplementary
material. However, as discussed in
  Section~7, due to the impact of using an
  estimated $\rho_0$ it may be hard to interpret results of the model
  assessment - especially if a chosen summary statistic depends on the
  unknown $\rho_0$.

\section{Simulation study} \label{sec:simu}
In this section we use simulations to study the joint
  performance of the second order
conditional composite likelihood, the optimization algorithm in
Section~\ref{sec:estalg} with additional lasso regularization in
Section~\ref{sec:lasso}, and the cross validation procedure. We
conduct simulation studies based on three different settings for a
five-variate point process $\boldsymbol{X}=(X_1,X_2,X_3,X_4,X_5)^\T$ on
$W=[0,1]^2$. For the two first
settings, $\boldsymbol{X}$ is a
multivariate LGCP. For the third setting, considered in
Section~7 of the supplementary material,
$\boldsymbol{X}$ is a so-called product shot-noise Cox process. The
construction of the random intensity function of a product shot-noise
Cox process is very different from that of an LGCP. We consider the
product shot noise Cox process to study how the proposed
methodology works in case of a misspecified model. The
qualitative conclusions from the third setting are quite similar to
the first two settings. We therefore focus here on the details and results
for the first two settings. Although it would be interesting to
consider simulations from other types of multivariate models
\cite[e.g.\ Gibbs as in][]{rajala:murrell:olhede:17}, we confine ourselves
to multivariate models where the PCFs and cross PCFs are known.

For both LGCP settings we simulate one covariate $Z(\cdot)$ and one
background intensity $\rho_0(\cdot)=400\exp(0.5 V(\cdot) -0.5^2/2)$,
where $Z$ and $V$ are zero mean unit variance Gaussian random fields
with exponential and Gaussian correlation functions,
i.e. $\textup{Corr}(Z(\bu),Z(\bv))=\textup{exp}(-||\bu- \bv||/0.05)$
and $\textup{Corr}(V(\bu),V(\bv))=\textup{exp}(-(||\bu-
\bv||/0.2)^2)$. The particular realizations of $Z$ and $\rho_0$ are shown in
Figure~\ref{fig:covlambda0}. These realizations are fixed
  throughout the simulation study.
\begin{figure}[ht!]
\centering
\begin{tabular}{c c}
\includegraphics[scale=0.25]{figures/covariate1.png}
\hspace{0.1cm}
\includegraphics[scale=0.25]{figures/background1.png}
\end{tabular}
\caption{Left: simulated covariate $Z$. Right: simulated $\rho_0$.} \label{fig:covlambda0}
\end{figure}

Table~\ref{tab:lgcp} shows the values used for the intensity
  function regression parameters $\bgamma$, and the standard deviation and
  correlation scale parameters $\bsigma$ and $\bphi$ for the
  type-specific latent fields. The table also shows the expected
  number of points $N$ for each point process.
\begin{table}
\caption{\label{tab:lgcp} Simulation settings for $\boldsymbol{X}$ in
  each setup $q=0,2$ (excluding $\balpha$ and $\bxi$). }
\centering
\fbox{%
\begin{tabular}{|c|cc|cc|c|c|cc|cc|c|}
\hline
$\boldsymbol{X}$ & $\bgamma_1$ & $\bgamma_2$ & $\bsigma$ & $\bphi$ &  $N$  &
                                                                     $\boldsymbol{X}$
  & $\bgamma_1$ & $\bgamma_2$ & $\bsigma$ & $\bphi$ & $N$\\
  \hline
  $X_1$ & 0.1 & -0.1 & $0.71$ & 0.02 & 550& $X_4$ & 0.4 & 0.1 & $0.71$ & 0.03 &750\\
  $X_2$ & 0.2 & -0.2 & $0.71$ & 0.02 & 619& $X_5$ & 0.5 & 0.2 & $0.71$ & 0.04 &830\\
  \cline{7-12}
  $X_3$ & 0.3 & 0 & $0.71$ & 0.03 & 677\\ \cline{1-6} 
  \hline
\end{tabular}}
\end{table}
Regarding the number $q$ of common latent fields with
  associated coefficients $\balpha$ and correlation scale parameters $\bxi$, we take $q=0$ for the first setting resulting in a case with independent components $X_1,\ldots,X_5$. In the second setting we let $q=2$ and choose $\balpha$ as specified in the table left in Figure~\ref{fig:simsetup2}. We moreover let $\xi_1=0.02$ and $\xi_2=0.03$. The resulting PCFs and cross PCFs are shown in the middle and right plots in Figure~\ref{fig:simsetup2}.
\begin{figure}[ht!]
\centering
\fbox{%
\begin{tabular}{|c| c c |}
\hline
$\boldsymbol{X}$ & $\balpha_{1}$ & $\balpha_{2}$ \\
  \hline
 $X_1$ & 0.5 & -1  \\
 $X_2$ & 0.5 & 0  \\
 $X_3$ & -1 & 0  \\
 $X_4$ & 0 & 0.5 \\
 $X_5$ & 0 & 0.5 \\
  \hline
\end{tabular}}
\hspace{0.1cm}
\begin{tabular}{c c}
\includegraphics[width=0.32\textwidth]{figures/true_pcf.pdf}
\includegraphics[width=0.32\textwidth]{figures/true_cross_pcf.pdf}
\end{tabular}
\caption{Left: $\balpha$. Middle: true PCFs. Right: true
  cross PCFs (note $g_{13}=g_{23}$ and $g_{14}=g_{15})$.}\label{fig:simsetup2}
\end{figure}
In the case $q=2$ we have a positive spatial dependence between $X_1$
and $X_2$ and between $X_4$ and $X_5$, while there is a negative
spatial dependence between $X_3$ and $(X_1,X_2)$ and between $X_1$ and
$(X_4,X_5)$. 

For our second order conditional composite likelihood we specified
exponential correlation models for the fields $Y_k$, $k=1,\dots,q$ and
$U_i$, $i=1,\dots,5$. In practice it is rarely the case  that the true
correlation models correspond exactly to the specified ones. To  reflect this we simulate the $Y_k$ and $U_i$ using Gaussian correlation functions, i.e.\ $\textup{Corr}(Y_k(\bu),Y_k(\bv))=\textup{exp}(-(||\bu- \bv||/\xi_k)^2)$ and $\textup{Corr}(U_i(\bu),U_i(\bv))=\textup{exp}(-(||\bu- \bv||/\phi_i)^2)$. Hence the model applied is misspecified for the simulated data. For each setting we generate 100 simulated realizations of $\mathbf{X}$.

In both settings we select $q$ among the values $0,1,\ldots,5,$ and
next, for the chosen $q$, $\lambda$ among the values
$10,8,6,5,4,3,2,1,0.5,0.25,0$. For this we use cross
  validation with $K=5$ and $L=10$. We consider
results using both the MIN and the 1-SE approach for the selection of
$q$. To assess the effect of regularization in an
  over-parametrized setting, we also report results where $q=7$ is fixed with $\lambda$ chosen by cross
validition. For the second order
conditional composite likelihood we only consider distinct pairs of
points $\bu,\bv$ with $\|\bu-\bv\| \le R= 0.1$.
The initial values for the components of $\btheta=(\balpha,\bxi,\bsigma^2,\bphi)$ are simulated as $\alpha_{ik} \sim
\textup{Unif(-0.25,0.25)}$, $\xi_k,\phi_i \sim \textup{Unif(0.01,0.04)}$ and $\sigma^2_i \sim \textup{Unif(0.4,0.6)}$. 
The parameter $\epsilon$ in the relative function convergence
criterion \eqref{eq:relconv} is set to $10^{-5}$ and the
convergence parameters $\tilde{\epsilon}$ and
  $\tilde{\tilde{\epsilon}}$ for the regularized optimization ((3) in the supplementary material) are set to $10^{-10}$.

We measure the performance for each selected model using mean integrated squared error (MISE) aggregated over respectively all  PCFs and all cross PCFs, i.e.
\begin{align} \label{eq:mise}
  \text{MISE}_{\text{between}}(\hat{\btheta})=\sum_{i<j} \textup{E} \left[\int_{0.01}^{0.1} \left( g_{ij}(r;\hat{\btheta}_{ij}) - g_{ij}(r;\btheta_{ij}) \right)^2 \textup{d} r \right]. 
\end{align}
We also consider $\text{MISE}_{\text{within}}(\hat{\btheta})$
and $\text{MISE}_{\text{total}}(\hat{\btheta})$ defined as $\text{MISE}_{\text{between}}(\hat{\btheta})$ but with sum over $i=j$ or $i \leq j$. 

We compare the performance of the proposed method with two non-parametric approaches. For the first approach, referred to as `simple', we estimate the intensity functions non-parametrically using the
\texttt{spatstat} \citep{baddeley:rubak:turner:15} procedure
\texttt{density.ppp} with bandwidths selected using the method introduced in
\cite{cronie:lieshout:18}. Next the PCFs and cross PCFs are estimated using the
\texttt{spatstat} procedures \texttt{pcfinhom} and
\texttt{crosspcfinhom} with the intensity functions replaced by the non-parametric estimates. For the PCF and cross PCF estimation we manually specify reasonable bandwidths based on the knowledge of the
true PCFs and cross PCFs (note that this is in favor of the
non-parametric approach). The second approach, referred to as `Diggle', is an adaption to the multivariate case of
the method proposed in \cite{diggle:etal:07} (see Section~6 in the supplementary material for details).
To measure the performances of the non-parametric approaches we simply replace the fitted parametric cross PCFs in \eqref{eq:mise} by the non-parametric estimates. 

\subsection{Five-variate LGCP with zero common latent fields} \label{sec:simstudy1}

In the case $q=0$, the MIN rule only selects the true value $q=0$ for 1\% of the
simulated data sets while values of $q=1,2,3$ are selected for 99\%
of the simulations, see left Table~\ref{tab:simcv}. Using the 1-SE
rule, $q=0$ is selected in 77\% of the cases and a value of $q$ bigger
than 1 is only selected in two cases. The reason that the MIN rule
frequently selects $q$ larger than zero may be that neither of
the models with $q=0,\ldots,3$ are severely overparametrized. E.g.\
with $q=3$, the in total 15 PCFs and cross PCFs are parametrized using
just 25 parameters, i.e.\ less than 2 parameters on average for each
PCF or cross PCF. Hence overfitting that can be detected by the cross
validation procedure mainly occurs for $q=4,5$. The middle third
column in Table~\ref{tab:simcv} (left) shows 95\% probability
intervals for the selected $\lambda$s when $q=1,2,3$ and the last
column shows the average percentages of $\alpha_{ik}$'s that are estimated to be 0. These columns show that when a larger $q$ is selected, then also a larger $\lambda$ is selected leading to a higher percentage of zeros in the estimated $\balpha$. This makes sense since larger $q$ means more superfluous parameters and hence more need for regularization. In the case $q=7$, the selected $\lambda$s tend to be markedly larger than for the smaller $q$s up to 3. Also 52\% of the $\alpha_{ik}$ are estimated to be zero in the case $q=7$ while the percentages are quite small for $q$ up to 3. For $q=1,2,3$, the selected $\lambda$ was zero (meaning no regularization) in 59\%, 57\%, and 55\% of the cases indicating that $q=1,2,3$ already leads to a rather sparse setup and explaining the small percentages of $\alpha_{ik}$ estimated to be zero.
\begin{table}
\caption{\label{tab:simcv}(Left: true $q=0$. Right: true $q=2$) Distribution of $q$ chosen by MIN and 1-SE rules, 95\% probability interval for selected $\lambda$s, and averages over simulated data sets of percentages of estimated $\alpha_{ik}$s equal to zero.} 
\centering
\fbox{%
\begin{tabular}{|c| c c c c| c c c c|}
$q$ & MIN & 1-SE & $\lambda$ &  ($\% \alpha_{ik}=0$) & MIN & 1-SE & $\lambda$ &  ($\% \alpha_{ik}=0$) \\
\hline
$0$ & 1 & 77 & - & - & 0 & 0 & - & - \\
$1$ & 32 & 21&  (0;0.61) & 2 & 2 & 39 & 0 & 0 \\
$2$ & 56 & 2 & (0;0.41) & 7 & 60 & 61 & (0;0.38) & 1  \\
$3$ & 11 & 0 &  (0;0.88) & 6 & 36 & 0  & (0;0.25)  & 0.6 \\
$4$ & 0 & 0 & - & - & 2 & 0 &  0 & 0  \\
\hline
$7$ & - & - & (0;4.52) & 52 & - & - & (0;2) & 26 \\
\hline
\end{tabular}}
\end{table}
 Figure~\ref{fig:meancvq0} shows the average of $\overline{CV}(q,0)$ over all simulated data sets and confirms that the CV scores are quite similar across different $q$. 
\begin{figure}[ht!]
\centering
\begin{tabular}{c}
\includegraphics[width=0.4\textwidth,height=0.3\textwidth]{figures/all_cv.pdf}
\end{tabular}
\caption{Averages over simulated data sets of $\overline{CV}(q,0)$ scores with minimum average CV-score subtracted. The bars show the average of the standard errors $SE(q,0)$ obtained for $\overline{CV}(q,0)$ for each simulated data set. Red is for $q=0$ while blue is for $q=2$. 
}\label{fig:meancvq0}
\end{figure}

Figure~\ref{fig:simpcf1} shows means and 95\% pointwise probability
intervals for estimates of a subset of the PCFs and cross PCFs
obtained for the simulated data sets  with $q$ selected among
$0,1,2,3,4,5$ using either MIN or 1-SE and with $\lambda=0$. We only
show estimates with no regularization since the regularized estimates
are very similar. The means are quite similar for MIN and
1-SE, and the MIN and 1-SE estimates are close to unbiased for
the cross PCFs. A moderate bias is present for the PCF. This is not unexpected as we specify the wrong parametric model. However, the simple non-parametric estimates are strongly biased in all cases.
\begin{figure}[ht!]
\centering
\begin{tabular}{c c c}
\includegraphics[width=0.33\textwidth]{figures/pcf_11_all_se.pdf}
\includegraphics[width=0.33\textwidth]{figures/pcf_12_all_se.pdf}
\includegraphics[width=0.33\textwidth]{figures/pcf_34_all_se.pdf}
\end{tabular}
\caption{(true $q=0$) Blue, green and red solid lines indicate pointwise means of estimates for selected cross PCFs using  MIN, 1-SE, or simple non-parametric estimation. The dotted lines indicate the corresponding $95 \%$ pointwise probability intervals. Black solid lines indicate true cross PCFs.} \label{fig:simpcf1}
\end{figure}

Table~\ref{tab:simmise1} gives total, within, and between MISEs with
different strategies for choosing $(q,\lambda)$ and for the two non-parametric approaches. The non-parametric approaches are clearly outperformed by the semi-parametric method. The results for MIN with $\lambda=0$ or $\lambda$ selected are very similar and also similar to 1-SE in case of $\text{MISE}_{\text{within}}$. However, $\text{MISE}_{\text{between}}$ for cross PCFs is more than twice as big for MIN compared to 1-SE. This is not so surprising since 1-SE chooses the true $q=0$ most of the time while MIN tends to choose larger values of $q$. The between MISEs are on the other hand on a much smaller scale than the within MISEs. 
\begin{table}
\caption{\label{tab:simmise1} MISE using $(q_{\min},0)$, $(q_{\min},\lambda)$, $(q_{\text{1-SE}},0)$, simple, or Diggle's non-parametric approach when
  true $q=0$.} 
\centering
\begin{tabular}{| l | c c c c c|}
\hline
& $(q_{\min},0)$ & $(q_{\min},\lambda)$ & $(q_{\text{1-SE}},0)$ & simple & Diggle\\
  \hline
  $\text{MISE}_{\text{total}}$ & 3.77$\cdot 10^{-4}$ & 3.78$\cdot 10^{-4}$ & 3.81$\cdot 10^{-4}$ & 2.22$\cdot 10^{-3}$ & 2.63$\cdot 10^{-3}$\\
$\text{MISE}_{\text{within}}$ & 1.02$\cdot 10^{-3}$ & 1.02$\cdot 10^{-3}$ & 1.11$\cdot 10^{-3}$ & 4.58$\cdot 10^{-3}$  & 3.36$\cdot 10^{-3}$ \\
$\text{MISE}_{\text{between}}$ & 5.45$\cdot 10^{-5}$ & 5.34$\cdot 10^{-5}$ & 2.01$\cdot 10^{-5}$ & 1.04$\cdot 10^{-3}$ & 2.27$\cdot 10^{-3}$ \\
  \hline
\end{tabular}

\end{table}

\subsection{Five-variate LGCP with two common latent fields} \label{sec:simstudy2}
 
In case of $q=2$, both MIN and 1-SE performs quite well in the sense that the chosen
$q$'s differ at most by one from the true $q$ in 98\% (MIN) or in
100\% (1-SE) of the cases and the true $q=2$ is chosen in 60\% (MIN)
or 61\% (1-SE) of the cases. The $\lambda$ column in
Table~\ref{tab:simcv} (right) shows 95\% probability intervals for the selected $\lambda$s. For $q=1,4$ the cross validation always selected $\lambda=0$. The selected $\lambda$s for $q=2,3$ are in general small and 80\% ($q=2$) or 95\% ($q=3$) of the $\lambda$s were selected to be zero. These results indicate that regularization is not pertinent in this case where the true $\balpha$ is not particularly sparse. Also the percentages of $\alpha_{ik}$ estimated to be zero are very small for $q=1,2,3,4$. In case of the overparametrized model $q=7$ we on the other hand do see an effect of regularization with larger selected $\lambda$s, and on average 26\% of the $\alpha_{ik}$s estimated to be zero.

Figure~\ref{fig:simpcf2} shows means and 95\% probability intervals
for selected estimated PCFs and cross PCFs obtained with MIN or 1-SE
without regularization.  In both cases MIN and 1-SE produce some bias
for the PCFs, which is expected as we specify the wrong model. As in
the case $q=0$, the non-parametric estimates are more biased
  than the semi-parametric estimates. 
\begin{figure}[ht!]
\centering
\begin{tabular}{c c c}
\includegraphics[width=0.33\textwidth]{figures/pcf_11_all_se_q2.pdf}
\includegraphics[width=0.33\textwidth]{figures/pcf_12_all_se_q2.pdf}
\includegraphics[width=0.33\textwidth]{figures/pcf_34_all_se_q2.pdf} 
\end{tabular}
\caption{(true $q=2$) Blue, green and red solid lines indicate pointwise means of estimates for selected cross PCFs using  MIN, 1-SE, or simple non-parametric estimation. The dotted lines indicate the corresponding $95 \%$ pointwise probability intervals. Black solid lines indicate true cross PCFs.} \label{fig:simpcf2}
\end{figure}

Table~\ref{tab:simmise2} shows that MIN and 1-SE perform very similar regarding MISE. 
In case of $\text{MISE}_{\text{between}}$, MIN and 1-SE are somewhat better than the simple approach but much better than Diggle's approach. On the other hand, MIN and 1-SE are somewhat better than
Diggle's approach but much better than the simple approach in terms of
$\text{MISE}_{\text{within}}$. Overall ($\text{MISE}_{\text{total}}$) the semi-parametric method outperforms the non-parametric methods.
\begin{table}
\caption{\label{tab:simmise2} MISE using $(q_{\min},0)$, $(q_{\min},\lambda)$, $(q_{\text{1-SE}},0)$, the simple or Diggles non-parametric approach when
  true $q=2$.}
\centering
\begin{tabular}{| l | c c c c c|}
\hline
& $(q_{\min},0)$ & $(q_{\min},\lambda)$ & $(q_{\text{1-SE}},0)$ & simple & Diggle \\
  \hline
$\text{MISE}_{\text{total}}$ & 1.64$\cdot 10^{-3}$ & 1.64$\cdot 10^{-3}$ & 1.65$\cdot 10^{-3}$ & 7.25$\cdot 10^{-3}$ & 4.91$\cdot 10^{-3}$ \\
$\text{MISE}_{\text{within}}$ & 4.43$\cdot 10^{-3}$ & 4.42$\cdot 10^{-3}$ & 4.40$\cdot 10^{-3}$ & 2.04$\cdot 10^{-2}$ & 8.63$\cdot 10^{-3}$ \\
$\text{MISE}_{\text{between}}$ & 2.43$\cdot 10^{-4}$ & 2.43$\cdot 10^{-4}$ & 2.71$\cdot 10^{-4}$ & 6.76$\cdot 10^{-4}$  & 3.05$\cdot 10^{-3}$ \\
  \hline
\end{tabular}
\end{table}

\section{Data examples} \label{sec:dataanalysis}
In the following we apply our new methodology to point patterns of
cells in tumor tissue and crime scenes in Washington DC. We remind
that the parameter matrix $\balpha$ consists of coefficients for the common
latent fields, $\bsigma$ is the vector of standard deviations for the
type-specific latent fields, and $\bxi$ and $\bphi$ are vectors of
correlation scale parameters for the common and type-specific latent fields.

\subsection{Lymph node metastasis}\label{sec:cells}

Cancer malignancies are complex tissues containing many different
cell types that may have opposing roles in tumor growth \citep{valkenburg:groot:pienta:18}. It is now recognized that the interactions between
the multiple cell types determine the intrinsic aggressiveness
of a tumor and its response to a given anti-cancer
treatment. Studying each cell type separately
is therefore inadequate \citep{weinberg:14,tsujikawa:etal:17}. With
modern multiplexing technology it is now possible to image a large
number of cell types simultaneously. Figure~\ref{fig:datalymph} (left)
is a
specific example, which shows a fluorescence image of a lymph node
metastasis. Figure~\ref{fig:datalymph} also shows point patterns of locations of
four types of cells extracted from the image using machine learning
techniques.
\begin{figure}[ht!]
\centering
\begin{tabular}{cc}
\includegraphics[width=0.32\textwidth]{figures/original_data.jpg}
\includegraphics[width=0.32\textwidth]{figures/stroma.png}
\includegraphics[width=0.32\textwidth]{figures/tumor.png}
\end{tabular}
\caption{Left: fluorescence image of a lymph node metastasis. Middle:
   bivariate point pattern of CD8 (blue) and $50 \%$ independently
   thinned Stroma (red) cells. Right: bivariate point pattern of $80
   \%$ independently thinned Hypoxic (purple) and $80 \%$
   independently thinned Normoxic (green) cells (data kindly provided by Arnulf Mayer, Dept.\ of Radiation Oncology, University Medical Center, Mainz, Germany).} \label{fig:datalymph}
\end{figure}
The four types of
cells (with abbreviated names and numbers of cells in parantheses) are Hypoxic tumor cells
(Hypoxic, 11733), Normoxic tumor cells (Normoxic, 18469), Stroma cells
(Stroma, 6015), and  Cytotoxic T-lymphocytes (CD8, 1466). For better visualization we only show random subsets
obtained by independent thinnings of the points.
Our aim is to
characterize the point patterns in terms of their intensity functions
and their PCFs and cross PCFs. We thereby quantify trends and within/between
interactions for the different types of cells. Such characterizations
may be used in subsequent studies e.g.\ for predicting outcomes of
cancer treatments \cite[see for example][]{herbst:etal:14,yan:etal:19}.

Figure~\ref{fig:kernelintensity} shows non-parametric
estimates of the intensity functions for the four point
patterns. These plots show a strong segregation between the
patterns of Stroma and CD8 cells versus the tumor cells. In the
following we study the more subtle  variation within the
bivariate point patterns of Stroma and CD8 respectively Normoxic and
Hypoxic.
\begin{figure}[ht!]
\centering
\begin{tabular}{c c}
\includegraphics[width=0.32\textwidth]{figures/stroma_int.png} &
\includegraphics[width=0.32\textwidth]{figures/cd8_int.png}\\
\includegraphics[width=0.32\textwidth]{figures/hyp_int.png}&
\includegraphics[width=0.32\textwidth]{figures/tumor_int.png}
\end{tabular}
\caption{ Kernel estimates of the intensity functions for Stroma
  (upper left), CD8 (upper right), Hypoxic (lower left) and Normoxic
  (lower right) with bandwidths  121.2, 359.5, 173,9 and 115.7, respectively.} \label{fig:kernelintensity}
\end{figure} 

There are no spatial covariates available for this data set. The
intensity functions are therefore proportional to the common component
$\rho_0(\cdot)$ both for the pairs Stroma,CD8 and Normoxic,Hypoxic. Since
the point patterns are of high cardinality we reduce computing time by
working with independent thinnings of the point patterns. The PCFs and
cross PCFs are invariant to independent thinning while the intensity
functions are only changed by a multiplication with the thinning
probability. In the following we present a detailed analysis of the Stroma-CD8 point pattern. The analysis for the Normoxic-Hypoxic tumor cells is quite similar and is presented in Section~8 of the supplementary material.

\subsubsection{Stroma and CD8} \label{sec:stroma}

For Stroma and CD8 we use all CD8 points and independently thin the
Stroma points with a thinning probability of 50\%. The point patterns
clearly show some large scale trends (Figure~\ref{fig:kernelintensity}) that are not easily fitted by simple
parametric models. We instead assume the model \eqref{eq:multi1},
choose CD8 as the baseline, 
and following \cite{hessellund:etal:20} estimate
$\bbeta=(\beta_{\textup{Str}},\beta_{\textup{CD8}})^\T=(\gamma_{\textup{Str}}-\gamma_{\textup{CD8}},\gamma_{\textup{CD8}}-\gamma_{\textup{CD8}})^\T$
by $\hat{\beta}_{\textup{Str}}= \textup{log} (3007/1466)=0.72$ and
$\hat{\beta}_{\textup{CD8}}=0$. We next choose $q$ among the values
$\{0,1,2\}$ using a $5$-fold cross validation as described in Section
\ref{sec:cvlambda}, where we resample $L=10$ times. We choose the
maximal interpoint distance $R$ for pairs of points to be $400$
$\mu$m which
corresponds to approximately 15\% of the largest observation window
side length.
\begin{figure}[ht!]
\centering
\begin{tabular}{c c}
\includegraphics[width=0.20\textwidth]{figures/cv_stroma_cd8.pdf}
\hspace{0.5cm}
\includegraphics[width=0.37\textwidth]{figures/lambda0_stroma.png}
\end{tabular}
\caption{Left: CV-scores (minus minimum CV-score) with standard errors. Right: non-parametric estimate of $\rho_0$ with bandwidth = 194.5.} \label{fig:cvlymph}
\end{figure}
According to the left panel in Figure~\ref{fig:cvlymph} we choose 
$q=1$ that minimizes the cross validation score. The right panel in Figure~\ref{fig:cvlymph} shows a non-parametric estimate of $\rho_0$ using the estimator introduced in \cite{hessellund:etal:20} with bandwidth chosen as described in Section~5 in the supplementary material.
\begin{table}
\caption{\label{fig:lymphresult} Parameter estimates for Stroma and CD8 for $q=1$.} 
\centering
\fbox{%
\begin{tabular}{| c |c c c  c| c |c c c  c|}
\hline
  & $\hat{\balpha}$ & $\hat{\bxi}$ & $\hat{\bsigma}$& $\hat{\bphi}$ & & $\hat{\balpha}$ & $\hat{\bxi}$ & $\hat{\bsigma}$& $\hat{\bphi}$ \\
\hline
Stroma & 0.52 & 63.4 & 0.32
 & 97.7 &CD8 & -0.52 & 63.4 & 0.78
& 193.6 \\
  \hline
\end{tabular}}
\end{table}

According to the parameter estimates in Table~\ref{fig:lymphresult}
and the resulting PCFs and cross PCFs shown in the left panel of
Figure~\ref{fig:lymphplot}, both
Stroma and CD8 are randomly clustered point processes. The
clustering is partly negatively correlated (cf.\
$\hat{\balpha}$ and the fitted cross PCF in Figure~\ref{fig:lymphplot}) and partly independent (cf.\ $\hat{\bsigma}$) between
Stroma and CD8. The strongest clustering is found for  CD8 due to the
higher value of $\hat \sigma_{\text{CD8}}$ than $\hat
\sigma_{\text{Str}}$, see also the fitted PCFs in Figure~\ref{fig:lymphplot}.
\begin{figure}[ht!]
\centering
\begin{tabular}{c c}
\includegraphics[scale=0.25]{figures/all_pcfs.pdf}
\includegraphics[scale=0.25]{figures/cross_pcf_ratio.pdf}
\end{tabular}
\caption{Left: estimated (cross) PCFs  using the
  semi-parametric model (solid) and simple  approach
  (dashed). Right: estimated (cross) PCF ratio using
  semi-parametric model (solid), simple approach (dashed) and consistent approach (dotted).} \label{fig:lymphplot}
\end{figure}

The total estimated variances for the log random intensity
functions of Stroma and CD8 are rather moderate, respectively 0.37 and
0.88, while the empirical variance of $\log \hat \rho_0$ over the
observation window is 1.15. In this sense, the majority of the
variation in the random intensity functions  (especially for Stroma)
is explained by $\rho_0$. 

Following Section~\ref{sec:modelassess}, the right panel in
Figure~\ref{fig:lymphplot} compares semi-parametric estimates of cross
PCF ratios $g_{\text{Str}}/g_{\text{CD8}}$ and
$g_{\text{Str,CD8}}/g_{\text{CD8}}$ with the non-parametric estimates introduced in  
\cite{hessellund:etal:20}. The agreement seems reasonable and this is
confirmed by global envelope $p$-values of 0.05 in case of
$g_{\text{Str}}/g_{\text{CD8}}$ and 0.09 for
$g_{\text{Str,CD8}}/g_{\text{CD8,CD8}}$, see also the global
envelope plots in Section~9 in the supplementary
material. Section~11 in the supplementary material further comments on model
assessment using the inhomogeneous cross $J$-function \citep{cronie:lieshout:16}.

Figure~\ref{fig:lymphplot} (left) also shows simple non-parametric PCF and cross PCF estimates which are generally smaller than the semi-parametric estimates. In particular, the
non-parametric estimate of the cross PCF suggests a strong negative
correlation between Stroma and CD8 for all spatial lags considered.
As discussed in \cite{shaw:moeller:waagepetersen:20}, this might be due to that the selected bandwidths imply too little
smoothing in the non-parametric intensity estimates (upper plots in
Figure~\ref{fig:kernelintensity}). 
Figure~\ref{fig:lymphplot} (right) further shows that the
simple estimates of cross PCF ratios deviate more from the consistent non-parametric estimates than the semi-parametric estimates.



\subsection{Washington DC street crimes}\label{sec:crimes}
It is of great interest for criminologists and police authorities to
study the spatial patterns of crime scenes since this can lead to
better understanding of factors affecting crime and more efficient
policing strategies. 
There are several competing theories regarding how crime occurrence depends on
environmental factors \citep{weisburd:etal:93,haberman:17}. Such
theories may be studied using the intensity model \eqref{eq:multi1}. Next,
estimates of PCFs and cross PCFs \eqref{eq:crosspcf1} may be used to assess whether
crime occurrence is fully explained by the available covariates
through the intensity function. If estimated PCFs and cross PCFs differ
markedly from one, this is a sign of unexplained random variation that may
call for further criminological investigation. Further, if cross correlations exist
between different crimes, this means that predictions of one type of
crime may be enhanced by consideration of other types of crimes.

In this section, we focus on the spatial
correlation between six common types of street crimes committed in
Washington DC in January and February 2017.  
\begin{figure}[ht!]
	\centering
	\begin{tabular}{ll}
		\includegraphics[width=0.45\textwidth]{figures/crimes.png} &
		\includegraphics[width=0.4\textwidth]{figures/washingtondc1.png} \\
	\end{tabular}
	\hspace{0.01cm}    \caption{Left: street crimes locations ($n=5378$). Right: a map of Washington DC. The data set is extracted from a
larger data set  publicly available from $\mathbf{http {:}//opendata.dc.gov/datasets/}$.}
	\label{fig1}
\end{figure}
The six types of crimes with numbers in parantheses are 1)
 Burglary (259), 2) Assault with  weapon (332), 3) Motor
vehicle theft (335),  4) Theft from automobile (1832), 5) Robbery
(366), and 6) Other theft (2254). The locations of the six types of crimes are illustrated in Figure~\ref{fig1}. This data set has previously been
considered by \cite{hessellund:etal:20} who focused on the dependency
of the street crime intensity functions on spatial covariates using
the model \eqref{eq:multi1}. In the
following we focus on the second order properties as described by the
PCFs and cross PCFs. We refer to  \cite{hessellund:etal:20}
for more details regarding the covariates and the fitted intensity
functions.

We apply the regularized estimation approach
described in Sections~\ref{sec:lasso}-\ref{sec:cvlambda}
where we first determine $q$ using cross validation without
regularization and next, for the chosen $q$, use another cross
validation to select the regularization parameter $\lambda$ to
potentially obtain a sparse submodel for the chosen $q$. For the cross
validation, we use $K=5$ and $L=10$, and choose $q$ in
$\{0,1,\ldots,5\}$ and $\lambda$ in $\{100,80,60,50,40,30,20,10,5,2.5,1,0.25,0\}$.
For the second order composite likelihood we use $R=1000$ meters.

The left panel in Figure \ref{fig:cvcrime} shows cross validation
scores for each $q$, where the 1-SE criterion leads to choosing $q=0$
while MIN chooses $q=1$. The middle panel shows CV-scores for each
$\lambda$ with $q=1$, where the minimum is obtained for $\lambda=0$.
The right panel in Figure \ref{fig:cvcrime} shows a non-parametric
estimate of $\rho_0$ using the estimator described in
\cite{hessellund:etal:20} with bandwidth chosen as described in
Section~3.3 in the supplementary
material. This is our current best proposal for
  a data-driven estimate of $\rho_0$. It may, however, seem too smooth
  since $\rho_0$ is supposed to  capture complex variation in the
  intensity due to urban structure and varying population density.
In the
following we focus on the results with $q=1$ and $\lambda=0$. Hence we obtain an estimate of $\balpha$ without regularization.
\begin{figure}[ht!]
\centering
\begin{tabular}{c c c}
\includegraphics[scale=0.2]{figures/cv1.pdf}
\hspace{0.1cm}
\includegraphics[scale=0.2]{figures/cv1_lamb.pdf}
\hspace{0.1cm}
\includegraphics[scale=0.28]{figures/lamb0.png}
\end{tabular}
\caption{Left: CV-score for $q$ with one standard error bars. Middle:
  CV-score for $\lambda$ given $q=1$. Right: non-parametric estimate
  of $\rho_0$ with bandwidth 2654 meters.} \label{fig:cvcrime}
\end{figure}


The parameter estimates for each street crime are given in
Table~\ref{fig:crimepar} except for the common latent field
correlation scale parameter estimate which is $\hat{\xi}=102.5$. The $\sigma_i$ estimates are
small to moderate for the first five crimes while the estimate $\hat \sigma_6$ for Other theft is about two times larger than the other $\sigma_i$ estimates. Regarding the latent field
$Y_1$, the $\alpha_{i1}$ estimates are pretty small for Assault,
Vehicle theft, Theft from auto, and Robbery while $\alpha_{11}$ for
Burglary and $\alpha_{61}$ for Other theft have fairly large estimates
$0.78$ and $-0.93$. The resulting estimated PCFs and cross PCFs are
shown in the left and middle panels of
Figure~\ref{fig:crosscrime}. The overall conclusion is that most
crimes are moderately clustered except for Burglary and Other theft
with strongest clustering for Other theft. Also the cross dependencies
seem fairly weak except for the pairs Burglary and Vehicle
  (crimes 1 and 3, positively correlated) and Burglary and Other theft
  (crimes 1 and 6, negatively correlated). 
The interpretation of these results is that except
for moderate random fluctuations, the spatial patterns of Assault, Vehicle
theft, Theft from auto and Robbery are quite well described by their intensity functions depending on
the common factor $\rho_0$ as well as covariate effects. On the other
hand, the random intensity functions for Burglary and Other theft seem
subject to more pronounced deviations from the intensity functions, and
these deviations are negatively correlated. In other words, if a
cluster of Burglaries not explained by the intensity function is
present in a certain area, then there tends to be less Other theft
committed in the same area and vice versa. 
\begin{table}
\caption{\label{fig:crimepar} Table of parameter estimates for each street crime for $(q,\lambda)=(1,0)$. Last two columns show estimates of  $\balpha_{\cdot l}$, $l=1,2$ with $(q,\lambda)=(2,2.5)$.}
\centering
\begin{tabular}{| c | c c  c | c c|}
\hline
Crime type  &  $\hat{\balpha}$ &  $\hat{\bsigma}$  & $\hat{\bphi}$  & $\hat{\balpha}_{ \cdot 1}$ &  $\hat{\balpha}_{\cdot 2}$\\
\hline
Burglary   & 0.78 & 0.50 & 245.8 & 0 & 0.76 \\
Assault & -0.12 & 0.51 & 457.5 & 0 & -0.09 \\
Vehicle Theft  & 0.49 & 0.14 & 20.5 & 0 & 0.47 \\
Theft F. Auto  & 0.09 & 0.58 & 2483.1 & 0 & 0.08 \\
Robbery  & -0.30 & 0.53 & 485.2 & 0 & -0.26 \\ 
Other theft & -0.93 & 0.96 & 20.5 & 0 & -0.97 \\
  \hline
\end{tabular}  
\end{table}

We also tried out $q=2$ for which the cross validation score is quite
close to the one for $q=1$. For $q=2$ the cross validation selected
$\lambda=2.5$. The last columns in Table~\ref{fig:crimepar} show the estimate of $\balpha$ obtained with $(q,\lambda)=(2,2.5)$. The lasso regularization has shrunk $\hat{\balpha}_{\cdot 1}$ to $\boldsymbol{0}$, while the estimate of $\balpha_{\cdot 2}$ is quite similar to the estimate of $\balpha_{\cdot1}$ for $q=1$. In view of this, one may argue that the lasso regularization makes our estimation approach more robust, since a too large selected $q$ can be counterbalanced by regularization on $\balpha$ with a $\lambda>0$.

Quite different conclusions are obtained with the simple
non-parametric analysis.
Figure~\ref{fig:crosscrime} shows that the non-parametric estimates
of the PCFs and cross PCFs are all considerably above the reference
value one, which would imply strong clustering within and between the
different types of crime. These results may well be
explained  by bias of the non-parametric estimates.
\begin{figure}[ht!]
\centering
\includegraphics[width=0.32\textwidth]{figures/pcf_est.pdf}
\includegraphics[width=0.32\textwidth]{figures/all_cross_pcf_par.pdf}
\includegraphics[width=0.32\textwidth]{figures/all_cross_pcf_non.pdf} 
\caption{Left: semi-parametric (solid) and simple non-parametric (dashed) estimates of PCFs for $(q,\lambda)=(1,0)$. Middle: semi-parametric estimates of cross PCFs for $(q,\lambda)=(1,0)$. Right: simple non-parametric estimates of cross PCFs.} \label{fig:crosscrime}
\end{figure}

For model assessment, we consider global envelope tests based on
differences between semi-parametric and consistent non-parametric
estimates for all 20 ratios $g_{ij}/g_{66}$, $1\le i \le j \le 6$,
$(i,j)\neq (6,6)$. The $p$-values obtained are between 0.089 and 0.624,
and hence do not provide evidence against our model. Some
representative global envelope plots for the differences are shown in
Section~10 in the supplementary
material. Further model assessment using the inhomogeneous cross
  $J$-function is discussed in Section~11 in the
  supplementary material.

We finally consider an explorative analysis focusing on patterns in
the common latent process $Y_1$. We define `residuals' $\Delta \log \Lambda_i(\bu)$ 
by $\log \Lambda_i(\bu) - \mu_i - \bbeta_i^\T
  \bz(\bu) - \frac{1}{p} \sum_{l=1}^p [ \log
  \Lambda_l(\bu) - \mu_l - \bbeta_l^\T \bz(\bu)]$. Due to the
  sum-to-zero constraint on $\balpha$ we obtain $\Delta \log
  \Lambda_i(\bu)= \alpha_{i1} Y_1(\bu) +   \sigma_i U_i - \frac{1}{p} \sum_{l=1}^p \sigma_l U_l(\bu)$.
Estimating $\Delta \log \Lambda_i$ by replacing $\Lambda_i$
by a kernel estimate and the parameters by their conditional
likelihood estimates, we  obtain $\wh Y_1(\bu) = (\hat \balpha^\T \hat \balpha)^{-1} \hat \balpha [\wh \Delta
  \log \Lambda_1(\bu),\ldots, \wh \Delta \log \Lambda_p(\bu)]^\T.$
The left plot in Figure~\ref{fig:explorative} shows $\wh Y_1$, where
the $\Lambda_i$ are estimated by kernel smoothing using a bandwidth of
3 km. There is some resemblance between  $\hat Y_1(\bu;h)$ and the
spatial distribution of median income shown in the middle plot of
Figure~\ref{fig:explorative}. Log median income is included as a
covariate in the regression model for the log intensity. It may
  therefore be
the case that $\hat
Y_1(\bu)$ reflects nonlinear effects of the financial status of a
neighborhood, cf.\ the right plot in Figure~\ref{fig:explorative}.
\begin{figure}[ht!]
	\centering
	\begin{tabular}{lll}
	\includegraphics[width=0.3\textwidth]{figures/Crime0_u.pdf} &	\includegraphics[width=0.3\textwidth]{figures/medincome.pdf} &
	\includegraphics[width=0.32\textwidth]{figures/YandIncome.pdf}\\
\end{tabular}
	\caption{Left: latent factor $\wh Y_1$. Middle: log median income within census tracts. Right: $\hat Y_1$ versus log median income.}
	\label{fig:explorative}
\end{figure}

\section{Conclusion} \label{sec:con}

The methodology introduced in this paper provides a
major step forward regarding second order analysis of multivariate
point processes with complex intensity functions. Existing approaches
\cite[such as simple non-parametric estimation or the approach
in][]{diggle:etal:07} rely on estimating the intensity functions using
kernel estimators. This tends to result in strong bias and/or large
variance for subsequent estimation of PCFs and cross PCFs. In
contrast, in the context of the model \eqref{eq:multi1},  our approach
circumvents the need to estimate the complex unknown intensity
function factor $\rho_0$. According to our simulation studies, the resulting PCF and cross PCFs appear
to be close to unbiased. For the
data examples considered, we obtain simple and interpretable models
that may result in better understanding of the interplay between
respectively cells in tumors and different types of crimes. 

A limitation of our approach, shared with existing methods,
is that we have not provided confidence intervals for parameter
estimates or confidence bands for estimated PCFs or cross PCFs. One
topic for further research would be to establish asymptotic results
for parameter estimates within the framework of estimating function
inference. This was done by \cite{hessellund:etal:20} regarding inference for the intensity function but the current problem of
inferring cross PCFs entail considerable additional theoretical
difficulties.

The impact of using regularization was not very strong in our
simulation studies when moderate values of $q$ were
considered. However, the crimes data example indicates that the use of
regularization may add robustness to the estimation procedure if a too
large $q$ is selected.

The spatial crime point pattern data were obtained by
  aggregating data over consecutive two months. The implications of this particular choice of
  aggregation are not obvious. A highly interesting topic for
  further research would be to extend our methodology to a space-time
  analysis. One could e.g.\ envisage a space-time multivariate LGCP with
  latent common and type specific Gaussian fields evolving over time.

We finally mention that other useful cross summary statistics
  than cross PCFs are
available for studying cross dependencies in inhomogeneous
multivariate point processes. The inhomogeneous cross $K$-function
\citep{moeller:waagepetersen:03} is basically a cumulative version of
the cross PCF. Profoundly different cross summary statistics are the
inhomogeneous cross empty space function, nearest neighbour
distance function and $J$-function introduced in
\cite{cronie:lieshout:16}. As discussed in
Section~11 in the supplementary material, the need to use an estimated
intensity function $\rho_0$ complicates interpretation of results for
these summary statistics. This is a topic that needs further investigation.
\\[\bsl]
{\bf Acknowledgements}\\[\bsl]
We thank Arnulf Mayer, Dept.\ of Radiation Oncology, University
Medical Center, Mainz, Germany, for providing the fluorescence image
and the point pattern data. Kristian B.\ Hessellund and Rasmus
Waagepetersen were supported by The Danish Council for Independent
Research | Natural Sciences, grant DFF - 7014-00074 `Statistics for
point processes in space and beyond', and by the Centre for Stochastic
Geometry and Advanced Bioimaging, funded by grant 8721 from the Villum
Foundation. Ganggang Xu was supported by NSF grant SES-1902195 and
Yongtao Guan by NSF SES-1758575.

\bigskip
\begin{center}
{\large\bf SUPPLEMENTARY MATERIAL}
\end{center}

The supplementary material for this paper contains further plots, an algorithm for updating regularized $\balpha$ and auxiliary results.

\bibliography{bib/mybib,bib/cancer_ref}
\bibliographystyle{jrssc_format/rss}

\end{document}


\maketitle

\begin{abstract}
This document contains supplementary material for the paper `Second order semi-parametric inference for multivariate log Gaussian Cox processes'. Cross-references starting with `M-' refer to that paper.
\end{abstract}

\hyphenation{ex-am-ple hy-phen-a-tion short}
\hyphenation{long la-tex}
\hyphenation{gen-e-ra-li-ty}
\hyphenation{li-ke-li-ho-od-funk-ti-on}
\hyphenation{log-li-ke-li-ho-od-funk-ti-on}
\hyphenation{es-ti-ma-ti-ons-funk-ti-on-er} 
%

\section{Conditional probability and likelihood}\label{app:condprob}

Define $X^{\text{pooled}}=\cup_{i=1}^p X_i$ with intensity function
$\sum_{i} \rho_i$ and second order joint intensity
$\sum_{l,k} \rho_{lk}$.
Define further the measure
\[ C(A \times B \times \{i\} \times \{j\})= \EE \sum_{\bu,\bv \in
    X^{\text{pooled}}}^{\neq}1[ \bu \in A, \bv \in B, \bu \in X_i, \bv \in X_j] \]
By Radon-Nikodym's theorem,
\[ C(A \times B \times \{i\} \times \{j\}) = \int_{A \times B}
  p_{ij}(\bu,\bv) \left [\sum_{l,k} \rho_{lk}(\bu,\bv) \right ] \dd \bu \dd \bv,\]
where for almost all $(\bu,\bv)$, $p_{ij}(\bu,\bv)$ is a probability
function on $\{1,\ldots,p\} \times \{1,\ldots,p\}$. This follows
because $C(\cdot \times \cdot \times \{i\} \times \{j\})$ is
absolutely continuous with respect to the second order factorial
measure 
\[ \alpha(A \times B)= \textup{E} \sum_{\bu,\bv \in
    X^{\text{pooled}}}^{\neq}1[ \bu \in A, \bv \in B]\]
of $X^{\text{pool}}$ which has density
$\sum_{l,k} \rho_{lk}$. The sum of intensities $\sum_{l,k}
\rho_{lk}(\bu,\bv) \dd \bu \dd \bv$ is the `probability'
that $X^{\text{pool}}$ `has points at $\bu$ and $\bv$'. It is therefore natural to interpret
$p_{ij}(\bu,\bv)$ as the conditional (Palm) probability that $\bu \in X_i, \bv
\in X_j$ given that $\bu,\bv \in X^{\text{pool}}$.
On the other hand,
\[ C(A \times B \times \{i\} \times \{j\}) = \int_{A \times B}
  \rho_{ij}(\bu,\bv) \dd \bu \dd \bv .\]
Thus we obtain M-\eqref{main-eq:proba}.

Another way to arrive at M-\eqref{main-eq:proba} is to define point processes $X_{ij}=\{(\bu,\bv)| \bu \in X_i, \bv \in X_j, \bu \neq \bv\}$ with
intensity functions $\rho_{ij}(\bu,\bv)$. We can further define the union
$\tilde X^{\text{pool}} = \cup_{i, j} X_{ij}$ with intensity
function $\sum_{i,j} \rho_{ij}$. If we now condition on
$\tilde X^{\text{pool}}$ and consider a point $(\bu,\bv) \in \tilde
X^{\text{pool}}$, then M-\eqref{main-eq:proba} is the conditional probability that this point
comes from $X_{ij}$. We could also define $\tilde X^{\text{pool}}$ as
$\cup_{i \le j} X_{ij}$ in which case we would get the conditional probabilities 
\begin{equation}\label{eq:qprob} \text{q}_{ij}(\bu,\bv)= \frac{\rho_{ij}(\bu,\bv)}{\sum_{l \le k} \rho_{lk}(\bu,\bv) }.\end{equation}

\begin{remark}
  An important property of M-\eqref{main-eq:lik} is that the score function
  is unbiased, see Lemma~\ref{lem:unb}. For this to hold it is crucial that when
  we sum over all $l,k$ in the denominator of M-\eqref{main-eq:proba} we also
  use product over all $i,j$ in M-\eqref{main-eq:lik}. Alternatively, using
  \eqref{eq:qprob}, we could define
\begin{equation}\label{eq:loglik2} L(\btheta)= \prod_{i \le j} \prod_{\substack{\bu \in X_i \cap W \\ \bv \in X_j \cap W \\ ||\bu-\bv|| \leq R}}^{\neq} \text{q}_{ij}(\bu,\bv; \btheta).
\end{equation}
In this case, a pair $\{\bu,\bv\}$ with $\bu \in X_i$ and $\bv
\in X_j$ only appears once for $i \neq j$ since the sum is now only
over $i\le j$. However, a pair  $\bu \neq \bv \in
X_i$ will contribute twice to the likelihood. We tried out the two
alternatives M-\eqref{main-eq:lik} and \eqref{eq:loglik2} on a number of
data sets and got very
similar estimates.
\end{remark}

\begin{remark} Note that if an
ordered pair $(\bu,\bv)$ appears in the product in M-\eqref{main-eq:lik} then
so does $(\bv,\bu)$. Hence in a practical implementation we may
restrict the product to $i \le j$ and if $i=j$ only include unordered
pairs $\{\bu,\bv\}$ with $\bu \neq \bv \in X_i$. We can finally square to get M-\eqref{main-eq:lik}.
\end{remark}

\section{Theoretical results concerning conditional composite likelihood score and Hessian}\label{app:theoresults}

In this section $\btheta^*$ denotes the parameter vector for which the data is generated.

\begin{lemma}\label{lem:unb}
The score function $\best(\btheta)=\nabla_{\btau} l_-(\btheta)$  is unbiased meaning $\EE [\best(\btheta^*)] =0$.
\end{lemma}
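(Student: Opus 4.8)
The plan is to compute $\EE[\best(\btheta^*)]$ directly by pushing the expectation inside the double sum over pairs of points and applying the second-order Campbell formula \eqref{eq:campbell2}. The essential observation is that, under correct specification, the parametric second-order intensities $\rho_{ij}(\cdot;\btheta^*)$ that appear in the score coincide with the true second-order intensities that reweight the Campbell integrals; this matching is exactly what produces the cancellation. A preliminary remark is that the unspecified factor $\rho_0(\bu)\rho_0(\bv)$ cancels in every ratio (as already noted below \eqref{eq:proba}), so none of the quantities below depends on it.

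Concretely, I would start from the score evaluated at the true parameter,
\[
\best(\btheta^*) = \sum_{i,j} \sum_{\substack{\bu \in X_i \cap W \\ \bv \in X_j \cap W}}^{\neq} 1_R \left( \frac{\sum_{k,l} \nabla_{\btau} \rho_{kl}}{\sum_{k,l} \rho_{kl}} - \frac{\nabla_{\btau} \rho_{ij}}{\rho_{ij}} \right),
\]
with all intensities evaluated at $\btheta^*$. Taking expectations, I apply \eqref{eq:campbell2} to each $(i,j)$ term, treating the bracketed expression times $1_R$ as the test function $h_2^{(ij)}(\bu,\bv)$. Since the reweighting factor furnished by \eqref{eq:campbell2} is precisely $\rho_{ij}(\bu,\bv;\btheta^*)$, the second summand contributes $\nabla_{\btau}\rho_{ij}$, because the $\rho_{ij}$ in the denominator cancels against the Campbell weight, while the first summand contributes $\rho_{ij}\,\sum_{k,l}\nabla_{\btau}\rho_{kl}/\sum_{k,l}\rho_{kl}$.

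Summing over $i,j$ and interchanging the finite sum over types with the integral, the identity $\sum_{i,j}\rho_{ij} = \sum_{k,l}\rho_{kl}$ collapses the first group of terms to $\sum_{k,l}\nabla_{\btau}\rho_{kl}$, which cancels exactly against the second group $\sum_{i,j}\nabla_{\btau}\rho_{ij}$. The integrand therefore vanishes pointwise on $\{(\bu,\bv)\in W^2:\|\bu-\bv\|\le R\}$, giving $\EE[\best(\btheta^*)]=0$.

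I expect the only real obstacle to be bookkeeping rather than ideas: justifying the interchange of expectation with the sum over pairs and over types, which is where the cutoff $1_R$ and the boundedness of $W$ are needed to ensure the integrals over $\{(\bu,\bv)\in W^2:\|\bu-\bv\|\le R\}$ are finite and Fubini applies. The single conceptual input is correct model specification, which is what guarantees that $\rho_{ij}(\cdot;\btheta^*)$ is the genuine second-order intensity of the data-generating process and hence the correct Campbell weight.
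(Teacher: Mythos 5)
Your proposal is correct and follows essentially the same route as the paper's own proof: apply the second-order Campbell formula \eqref{eq:campbell2} to each $(i,j)$ term so that the true $\rho_{ij}(\cdot;\btheta^*)$ appears as the integration weight, cancel it against the denominator of the second summand, and then use $\sum_{i,j}\rho_{ij}=\sum_{k,l}\rho_{kl}$ to collapse the first group of terms, making the integrand vanish identically. The only differences are cosmetic: you keep the indicator $1_R$ explicit and flag the Fubini/integrability justification, which the paper's proof suppresses.
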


\begin{proof}
\begin{align*}
\textup{E}[\best(\btheta^*)] &=  \textup{E} \left[ \sum_{i,j} \sum_{\substack{\bu \in X_i \cap W\\ \bv \in X_j \cap W}}^{\neq} \left( \frac{\sum_{k,l} \nabla_{\btau} \rho_{kl}(\bu,\bv;\btheta_{kl}^*)}{\sum_{k,l}  \rho_{kl}(\bu,\bv;\btheta_{kl}^*)} - \frac{\nabla_{\btau} \rho_{ij}(\bu,\bv;\btheta_{ij}^*)}{\rho_{ij}(\bu,\bv;\btheta_{ij}^*)} \right) \right] \\
& = \sum_{i,j} \int_{W^2} \left( \frac{\sum_{k,l} \nabla_{\btau} \rho_{kl}(\bu,\bv;\btheta_{kl}^*)}{\sum_{k,l}  \rho_{kl}(\bu,\bv;\btheta_{kl}^*)} - \frac{\nabla_{\btau} \rho_{ij}(\bu,\bv;\btheta_{ij}^*)}{\rho_{ij}(\bu,\bv;\btheta_{ij}^*)}  \right)\rho_{ij}(\bu,\bv;\btheta_{ij}^*) \textup{d}\bu \textup{d}\bv \\
&=   \int_{W^2}  \sum_{k,l} \nabla_{\btau} \rho_{kl}(\bu,\bv;\btheta_{kl}^*) - \sum_{i,j} \nabla_{\btau} \rho_{ij}(\bu,\bv;\btheta_{ij}^*) \textup{d}\bu \textup{d}\bv = \boldsymbol{0}.
\end{align*}
\end{proof}

\begin{lemma}\label{prop:sens}
The expected Hessian matrix of $l_-(\btheta^*)$ with respect to $\btau$ is given by:
\begin{align*}
\textup{H}(\btheta^*)=\int_{W^2}1_R(\bu,\bv)\textup{Cov}(Z(\bu,\bv,\btheta^*))  \sum_{i,j} \rho_{ij}(\bu,\bv;\btheta^*) \textup{d} \bu \textup{d} \bv,
\end{align*}
where for each $\bu$, $\bv$, $Z(\bu,\bv,\btheta^*)$ denotes a random vector which takes values $\nabla_{\btau} \textup{log}\rho_{ij}(\bu,\bv; \btheta^*)$ with probabilities $\textup{p}_{ij}(\bu,\bv; \btheta^*)$, $i,j=1,\ldots,p$.
\end{lemma}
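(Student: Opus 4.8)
The plan is to differentiate $l_-$ twice with respect to $\btau$, take expectations using Campbell's formula \eqref{eq:campbell2}, and then recognise the resulting pointwise integrand as a categorical information matrix via the second Bartlett identity. Since $l_-(\btheta)=-\sum_{i,j}\sum_{\bu\in X_i\cap W,\,\bv\in X_j\cap W}^{\neq}1_R\log\textup{p}_{ij}(\bu,\bv;\btheta)$, differentiating term by term gives $\nabla^2_\btau l_-(\btheta)=-\sum_{i,j}\sum^{\neq}1_R\,\nabla^2_\btau\log\textup{p}_{ij}(\bu,\bv;\btheta)$. Applying the (signed, componentwise) version of Campbell's formula \eqref{eq:campbell2} to each $(i,j)$ term, evaluated at $\btheta^*$, turns the double sum into $\textup{H}(\btheta^*)=-\sum_{i,j}\int_{W^2}1_R\,\nabla^2_\btau\log\textup{p}_{ij}(\bu,\bv;\btheta^*)\,\rho_{ij}(\bu,\bv;\btheta^*)\,\dd\bu\,\dd\bv$.

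Next I would factor the intensity as $\rho_{ij}=\textup{p}_{ij}\sum_{k,l}\rho_{kl}$, which is exactly the definition \eqref{eq:proba} of the conditional probabilities, so that for each fixed $(\bu,\bv)$ the integrand becomes $\big(\sum_{k,l}\rho_{kl}\big)$ times the weighted average $-\sum_{i,j}\textup{p}_{ij}\,\nabla^2_\btau\log\textup{p}_{ij}$. The key observation is that, for each $(\bu,\bv)$, the family $\{\textup{p}_{ij}(\bu,\bv;\btheta)\}_{i,j}$ is a parametric probability distribution on $\{1,\dots,p\}^2$ with $\sum_{i,j}\textup{p}_{ij}\equiv 1$. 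Differentiating this identity twice and using $\nabla_\btau\textup{p}_{ij}=\textup{p}_{ij}\nabla_\btau\log\textup{p}_{ij}$ yields the second Bartlett identity $-\sum_{i,j}\textup{p}_{ij}\,\nabla^2_\btau\log\textup{p}_{ij}=\sum_{i,j}\textup{p}_{ij}\,(\nabla_\btau\log\textup{p}_{ij})(\nabla_\btau\log\textup{p}_{ij})^\T$, whose right-hand side is the covariance, under the probabilities $\textup{p}_{ij}$, of the random score vector $\nabla_\btau\log\textup{p}_{ij}$ (its mean vanishes because $\sum_{i,j}\nabla_\btau\textup{p}_{ij}=\nabla_\btau\sum_{i,j}\textup{p}_{ij}=\boldsymbol{0}$, the same computation underlying Lemma~\ref{lem:unb}).

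Finally I would match this covariance to the stated random vector $Z$. Writing $\nabla_\btau\log\textup{p}_{ij}=\nabla_\btau\log\rho_{ij}-\nabla_\btau\log\sum_{k,l}\rho_{kl}$, the subtracted term does not depend on $(i,j)$ and is therefore a constant shift common to all categories; since a covariance is invariant under adding a constant vector, the covariance of $\nabla_\btau\log\textup{p}_{ij}$ equals the covariance of $\nabla_\btau\log\rho_{ij}$, that is $\textup{Cov}(Z(\bu,\bv,\btheta^*))$. Substituting back gives $\textup{H}(\btheta^*)=\int_{W^2}1_R\,\textup{Cov}(Z(\bu,\bv,\btheta^*))\sum_{i,j}\rho_{ij}(\bu,\bv;\btheta^*)\,\dd\bu\,\dd\bv$, as claimed.

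The routine part is the two differentiations and the bookkeeping; the content lies in the Bartlett identity applied pointwise to the categorical family $\{\textup{p}_{ij}\}$, together with the shift-invariance remark that lets one replace $\log\textup{p}_{ij}$ by $\log\rho_{ij}$ inside the covariance and so recover the $Z$ defined through $\rho_{ij}$. The main technical obstacle I anticipate is justifying the use of Campbell's formula for the matrix-valued, not-necessarily-nonnegative integrand $\nabla^2_\btau\log\textup{p}_{ij}$ and the interchange of differentiation and expectation; I would handle this by applying \eqref{eq:campbell2} componentwise to positive and negative parts under an integrability assumption on the relevant second-order moment measures over $\{(\bu,\bv)\in W^2:\|\bu-\bv\|\le R\}$, which is harmless given the bounded window and the smooth exponential form of the cross PCFs in \eqref{eq:crosspcf1}.
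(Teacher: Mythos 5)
Your proof is correct, and its core algebra is organized differently from the paper's. The paper proceeds by brute force: it expands $\nabla^2_{\btau} l_-$ with the quotient rule in terms of $\nabla_{\btau}\rho_{ij}$ and $\nabla^2_{\btau}\rho_{ij}$, applies Campbell's formula so that each $(i,j)$ term is integrated against the weight $\rho_{ij}$, and then observes that after summing over $(i,j)$ all second-derivative terms $\nabla^2_{\btau}\rho$ cancel, leaving a difference of outer products which is recognized as $\mathrm{E}[ZZ^\T]-\mathrm{E}[Z]\,\mathrm{E}[Z]^\T=\textup{Cov}(Z)$. You instead keep the Hessian in the form $-\sum 1_R\,\nabla^2_{\btau}\log \textup{p}_{ij}$, apply Campbell, factor $\rho_{ij}=\textup{p}_{ij}\sum_{k,l}\rho_{kl}$, and dispose of all second derivatives at once via the second Bartlett identity for the categorical family $\{\textup{p}_{ij}(\bu,\bv;\btheta)\}$ at each fixed $(\bu,\bv)$; the final step identifying $\textup{Cov}(\nabla_{\btau}\log \textup{p}_{IJ})$ with $\textup{Cov}(Z)$ uses that $\log \textup{p}_{ij}$ and $\log\rho_{ij}$ differ by a term independent of $(i,j)$, so the covariance is unchanged. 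The two computations are the same in substance --- the paper's cancellation of the $\nabla^2_{\btau}\rho$ terms is precisely your Bartlett identity in disguise --- but your organization makes the structure transparent: $\textup{H}(\btheta^*)$ is the Campbell average of the pointwise Fisher information of the conditional categorical model, which in particular makes positive semi-definiteness (exploited in Section~\ref{sec:estalg} when taking the square root of $\hat{\textup{H}}$) immediate, whereas the paper's version is more elementary in that it needs no named identity, only calculus. Your remark on justifying Campbell's formula for signed, matrix-valued integrands is a legitimate technical point that the paper passes over silently, and your treatment of it (componentwise decomposition under integrability on the bounded set $\{\|\bu-\bv\|\le R\}$) is adequate.
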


\begin{proof}
We suppress the arguments $\bu$, $\bv$ and $\btheta^*$ in order to save space. The Hessian matrix for $l_-(\btheta)$ is:
\begin{align*}
&\frac{\partial}{\partial \btau^\T} e(\btheta)= \nabla^2_{\btau} l_-(\btheta)= \\
& \sum_{i,j} \sum_{\substack{\bu \in X_i \cap W \\ \bv \in X_j \cap W}}^{\neq} \frac{( \sum_{k,l} \nabla^2_{\btau}  \rho_{kl} ) (\sum_{k,l} \rho_{kl}) - (\sum_{k,l} \nabla_{\btau}  \rho_{kl})( \sum_{k,l} \nabla^\T_{\btau}  \rho_{kl})  }{(\sum_{k,l} \rho_{kl})^2} \quad -\\
& \sum_{i,j} \sum_{\substack{\bu \in X_i \cap W \\ \bv \in X_j \cap W}}^{\neq}  \frac{ (\nabla^2_{\btau} \rho_{ij}) \rho_{ij} - (\nabla_{\btau} \rho_{ij}) (\nabla^\T_{\btau} \rho_{ij})}{\rho_{ij}^2} 
\end{align*}
The expected Hessian is then given by:
\begin{align*}
& \textup{H}(\btheta^*) \\
= & \sum_{i,j} \int_{W^2} 1_R \frac{( \sum_{k , l} \nabla^2_{\btau}  \rho_{kl} ) (\sum_{k  , l} \rho_{kl}) - (\sum_{k , l} \nabla_{\btau}  \rho_{kl})( \sum_{k , l} \nabla^\T_{\btau}  \rho_{kl})  }{(\sum_{k , l} \rho_{kl})^2} \rho_{ij} du dv \quad -\\  & \sum_{i , j} \int_{W^2} 1_R \frac{ (\nabla^2_{\btau} \rho_{ij}) \rho_{ij} - (\nabla_{\btau} \rho_{ij}) (\nabla^\T_{\btau} \rho_{ij})}{\rho_{ij}^2} \rho_{ij} du dv \\
= & \int_{W^2} 1_R \sum_{i , j} \frac{ (\nabla_{\btau} \rho_{ij}) (\nabla^\T_{\btau} \rho_{ij})}{\rho_{ij}} - \frac{(\sum_{k,l} \nabla_{\btau}  \rho_{kl})( \sum_{k,l} \nabla^\T_{\btau}  \rho_{kl})  }{\left( \sum_{k,l} \rho_{kl} \right)} du dv \\
= & \int_{W^2} 1_R \left( \sum_{k,l} \rho_{kl} \right) \left( \sum_{i , j} \nabla_{\btau} \textup{log}(\rho_{ij}) \nabla^\T_{\btau} \textup{log} (\rho_{ij}) \textup{p}_{ij}\right) du dv\\
& - \int_{W^2} 1_R  \left( \sum_{k,l} \rho_{kl} \right) \left( (\sum_{k,l} \nabla_{\btau}  \textup{log} ( \rho_{kl})\textup{p}_{kl}) ( \sum_{k,l} \nabla_{\btau} \textup{log} (\rho_{kl})\textup{p}_{kl} )^\T \right) du dv \\
= & \int_{W^2} 1_R \left( \sum_{i,j} \rho_{ij} \right) \textup{Cov}(Z(\btheta^*)) du dv.
\end{align*}
\end{proof}

\section{Quadratic approximation and least squares} \label{app:pnm}
Omitting the first term not depending on $\btau$ and letting $K=-\textup{H}^{-1}(\tilde \btau)\best(\tilde \btau)$, the quadratic approximation M-\eqref{main-eq:taylor} can be rewritten as follows:
\begin{align} 
  &   (\btau - \tilde \btau)^\T \best(\tilde \btau)+ \frac{1}{2}(\btau - \tilde \btau)^\T \textup{H} (\tilde \btau)(\btau - \tilde \btau) \nonumber\\
 = &   -(\btau - \tilde \btau)^\T  \textup{H} (\tilde \btau)K + \frac{1}{2}(\btau - \tilde \btau)^\T \textup{H} (\tilde \btau)(\btau - \tilde \btau)  \nonumber\\
 =  &   \frac{1}{2}(K-(\btau - \tilde \btau))^\T \textup{H} (\tilde
      \btau)(K-(\btau - \tilde \btau))  -\frac{1}{2}K^\T \textup{H}(\tilde \btau)K.\nonumber
\end{align}
Hence,  minimizing M-\eqref{main-eq:taylor} is a least squares problem:
\begin{align*} 
\hat{\btau} = \textup{arg min}_{\btau} \bigg( \|\textup{H} (\tilde \btau)^{1/2}(K-(\btau- \tilde \btau))\|^2 \bigg) 
=  \textup{arg min}_{\btau} \bigg( \|Y -X\btau \|^2 \bigg) ,
\end{align*}
where
\begin{align*}
Y=\textup{H}(\tilde \btau)^{1/2}\left( -\textup{H}(\tilde \btau)^{-1}\best(\tilde \btau)+\tilde \btau \right) \quad \textup{and} \quad
X=\textup{H}(\tilde \btau)^{1/2}.
\end{align*}

\section{Update of regularized $\balpha$} \label{app:updatealp}
Following \cite{shi:etal:16}, we  update $\balpha$ by minimizing the augmented Lagrangian object function 
\begin{align*}
Q_{\lambda,\mu}(\balpha,\boeta) = \frac{1}{2} ||Y -X\balpha ||^2  + \lambda \sum_{i=1}^{p}\sum_{j=1}^q| \alpha_{ij} | + \boeta \bC \balpha + \frac{\mu}{2} ||\bC \balpha||^2,
\end{align*}
where $\boeta \in \R^q$ is the Lagrange multiplier, and $\mu >0$ is a
penalty parameter that we set to 1 as in \cite{shi:etal:16}. Letting $\balpha^{\text{current}}$, $\boeta^{\text{current}}$, $\balpha^{\text{new}}$, and $\boeta^{\text{new}}$ denote temporary vectors used in the iterative algorithm, we initialize $\balpha^{\text{current}}=\balpha^{(n)}$ and $\boeta^{\text{current}}=\boldsymbol{0}$. We then iterate updates
\begin{align*}
\balpha^{\text{new}} \leftarrow \textup{arg min}_{\balpha} Q_{\lambda,\mu}(\balpha,\boeta^{\text{current}},\bpi^{(n)}) \quad \boeta^{\text{new}} \leftarrow \boeta^{\text{current}} + \mu \bC \balpha^{\text{new}}  .
\end{align*}
The updating is terminated if for some $\tilde{\epsilon}$ and $\tilde{\tilde{\epsilon}}$,
\begin{align} \label{eq:innercrit}
|| \balpha^{\text{new}}-\balpha^{\text{current}}|| < \tilde{\epsilon} \quad \textup{and} \quad || \boeta^{\text{new}}-\boeta^{\text{current}}|| < \tilde{\tilde{\epsilon}},
\end{align}
in which case $\hat \balpha:=\balpha^{\text{new}}$. Otherwise $\balpha^{\text{current}} \leftarrow \balpha^{\text{new}}$ and $\boeta^{\text{current}} \leftarrow \boeta^{\text{new}}$ and a new iteration takes place.

The update leading to $\balpha^{\text{new}}$ is conducted using cyclical updating of the entries $\alpha_{ij}^{\text{new}}$ in $\balpha^{\text{new}}$. The update of the $ij$th entry is 
\begin{align} \label{eq:alphaupdate}
c_1 & \leftarrow  X_{\cdot ij}^\T \left(Y - \sum_{lk \neq ij} X_{\cdot lk} \alpha_{lk}^{\text{new}} \right) \nonumber \\
  c_2 & \leftarrow \mu   \left(\sum_{lk \neq ij} \alpha_{lk}^{\text{new}} \bC_{\cdot lk}^\T \bC_{\cdot ij} + \bC_{\cdot ij}^\T \boeta/\mu\right) \nonumber \\
\alpha_{ij}^{\text{new}} & \leftarrow \frac{S \left( c_1 - c_2, \lambda \right)}{ X_{\cdot ij}^\T X_{\cdot ij} + \mu \bC_{\cdot ij}^\T \bC_{\cdot ij}},
\end{align}
where  $X_{\cdot ij}$ and $C_{\cdot ij}$ are the columns of $X$ and
$C$ corresponding to $\alpha_{ij}$ (i.e.\ when $\balpha$ is laid out
columnwise). The  soft-thresholding operator $S(\cdot,\cdot)$ is  given by:
\begin{align*}
S(c,\lambda)= \begin{cases} c - \lambda &\textup{ if } c>0 \textup{ and } \lambda < |c| \\
 c+\lambda &\textup{ if } c<0 \textup{ and } \lambda < |c| \\
 0 &\textup{ if } \lambda > |c|.
\end{cases}
\end{align*}

Algorithm~\ref{fig:alg2} gives an overview of the regularized cyclical block descent algorithm.
\begin{algorithm}
\caption{Regularized cyclical  block descent algorithm}\label{fig:alg2}
\begin{algorithmic}[1]
  \BState Simulate initial parameters $\hat{\balpha}^{(0)}$, $\hat{\bxi}^{(0)}$, $\hat{\bsigma}^{2(0)}$ and $\hat{\bphi}^{(0)}$
  \BState $n:=0$
  \BState $\boldsymbol{repeat}$
\BState \quad $\tilde \balpha :=\balpha^{(n)}$,  $\tilde \bxi := \bxi^{(n)}$, $\bsigma^2 := \bsigma^{2(n)}$ and  $\bphi := \bphi^{(n)}$
\BState \quad update $\tilde \balpha$ using augmented Lagrangian method followed by line search
\BState \quad update $\tilde \bxi$, $\tilde \bsigma^2$ and $\tilde \bphi$ in turn using M-\eqref{main-eq:minfct2} followed by line search
\BState \quad $\balpha^{(n+1)}:=\tilde \balpha$, $\bxi^{(n+1)}:=\tilde \bxi$, $\bsigma^{2(n+1)} := \tilde \bsigma^2$, and  $\bphi^{(n+1)} := \tilde \bphi$
\BState \quad $n:=n+1$
\BState $\boldsymbol{until}$ relative convergence for object function M-\eqref{main-eq:penalize}
\BState $\boldsymbol{return}$ $\hat \btheta=\btheta^{(n)}$
\end{algorithmic}
\end{algorithm}

\section{Bandwidth selection for $\rho_0$}\label{app:bandwidth}
Following \cite{hessellund:etal:20}, we can estimate
$\rho_0$ using the semi-parametric kernel estimator:
\begin{align} \label{eq:rho0est}
\hat{\rho}_0(\bu)= \frac{1}{p} \sum_{i=1}^p \sum_{\bv \in X_i \cap W} \textup{exp}(-\hat{\boldsymbol{\beta}}_i^\T \bz(\bv))\frac{k((\bu-\bv)/b)}{b^dc_b(\bv)},
\end{align}
where $k$ is a $d$-dimensional kernel, $b$ is the bandwidth, and
$c_b(v)=\int_W k(\bu-\bv) \dd \bu$ is an edge correction factor. 

We
suggest to choose the bandwidth according to a criterion inspired by
\cite{cronie:lieshout:16} who consider the squared difference between the observation window area
and an estimate (depending on the bandwidth) of this area. However,
exact knowledge of the observation window area may not always be
available (this is e.g.\ the case for the crime data, where the
observation window depends on the complex urban structure of
Washington DC). To handle this we take advantage of our 
 multivariate setup.  Define $X^{\text{pooled}} = \cup_{i=1}^p X_i$ with intensity function $\rho^{\text{pooled}}(\bu)= \sum_{i=1}^p \rho_i(\bu)$. An estimator of $\rho^{\text{pooled}}$ is simply given by $\hat{\rho}^{\text{pooled}}(\bu)=\hat{\rho}_0(\bu)\sum_{i=1}^p \textup{exp}(\hat{\boldsymbol{\beta}}_i^\T \bz(\bu))$. We can then define two different estimators, $\hat{\omega}$ and $\hat{w}$, for the area of the observation window:
\begin{align*}
\hat{\omega}(b) = \frac{1}{p} \sum_{i=1}^p \sum_{\bu \in X_i \cap W}
  \frac{1}{\hat{\rho}_0(\bu)
  \textup{exp}(\hat{\boldsymbol{\beta}}_i^\T \bz(\bu))} \quad
  \textup{and} \quad\hat{w}(b) = \sum_{\bu \in X^{\text{pooled}} \cap W}\frac{1}{\hat{\rho}^{\text{pooled}}(\bu)},
\end{align*} 
where the dependence on $b$ is through $\hat \rho_0(\cdot)$.
We then select the bandwidth $b$ that minimizes $(\hat{\omega}(b) - \hat{w}(b))^2$. Hence, the bandwidth can be selected  without specifying the observation window.

\section{Modification of \cite{diggle:etal:07}s second order analysis} \label{app:diggle}

As pointed out in \cite{diggle:etal:07}, non-parametric estimation of
both first and second order properties from the same point pattern
data is an ill-posed problem due to confounding between variations in the intensity function and random clustering. In case of a bivariate case-control point process, and assuming the model M-\eqref{main-eq:multi1}, \cite{diggle:etal:07} suggested to estimate $\rho_0$ using the control points and plug in this estimate when inferring the clustering properties of the case process. This approach can be extended to the multivariate ($p>2$) setting as follows. For each type $i=1,\dots, p$, we modify \eqref{eq:rho0est} to obtain the estimator
\begin{align*}
\hat{\rho}_{0,-i}(\bu)= \frac{1}{p-1} \sum_{\substack{k=1\\ k \neq i}}^p \sum_{\bv \in X_k} \frac{k((\bu-\bv)/b)}{\textup{exp}(-\hat{\boldsymbol{\beta}}_k^\T \bz(\bv))b^d c_b(\bv)},
\end{align*}
that does not utilize the $i$th point pattern. To ease the computation
time we estimate one common bandwidth $b$ for all $i=1,\dots,p$, by the
bandwidth selection criterion described in
Section~\ref{app:bandwidth}. To estimate the PCFs and cross PCFs we
use the \texttt{spatstat} procedures \texttt{pcfinhom} and
\texttt{crosspcfinhom}, where we specify the intensity functions by
\begin{equation}\label{eq:hatrhoi}
\hat{\rho}_i(\bu)=\hat{\rho}_{0,-i}(\bu) \textup{exp}(\hat{\bbeta}_i^\T \bz(\bu)).
\end{equation}
We manually choose the
bandwidth for the PCFs and cross PCFs.

\section{Simulation study for product shot-noise Cox process}\label{sec:productshotnoise}

Multivariate product shot-noise Cox processes were introduced in
\cite{jalilian:etal:15}. A $p$-variate product shot-noise Cox process
$\boldsymbol{X}=(X_1,\ldots,X_p)$ is driven by $p$ parent Poisson
processes $\Phi_1,\ldots,\Phi_p$. The random intensity function of the
$i$th component $X_i$ is a product of a usual shot-noise field $S_i$
depending on $\Phi_i$ as well as certain product fields $F_{il}$, $l
\neq i$, that via $\Phi_l$ model positive or negative impact of the
$l$th component process $X_l$ on $X_i$, see \cite{jalilian:etal:15}
for details. \cite{jalilian:etal:15} derived closed-form expressions
for the PCFs and cross PCFs of a product shot-noise Cox process. Using
simulations from a product shot-noise Cox process we can therefore
assess the performance of our estimation methodology when the assumed
multivariate LGCP model is misspecified. 

To consider a practically
relevant case of a product shot-noise Cox process, we use a parameter
setting for the interaction and clustering parameters corresponding to
the model fitted in \cite{jalilian:etal:15} to point patterns of locations
of five rain forest tree species. We next rescale the model to the
unit square. The PCFs and cross PCFs are shown in
Figure~\ref{fig:prodpcfs}.
\begin{figure}[ht!]
\centering
\begin{tabular}{c c}
\includegraphics[width=0.45\textwidth]{figures/prodpcfs}
\includegraphics[width=0.45\textwidth]{figures/prodcrosspcfs}
\end{tabular}
\caption{PCFs (right) and cross PCFs (left) for five-variate product shot-noise Cox process used in simulation study.} \label{fig:prodpcfs}
\end{figure}
We simulate $\rho_0$ as a log Gaussian random field and
do not include any covariates. Each component process contains on
average 425 points. 

We apply our estimation methodology as
detailed in Section~M-\ref{main-sec:simu} using cross validation to select
$q$  and $\lambda$ (with $q$ chosen according to the
MIN rule). As in
Sections~M-\ref{main-sec:simstudy1}-M-\ref{main-sec:simstudy2} we compare
the performance of our methodology with non-parametric
alternatives (the `simple' method and `Diggle's' method) as discussed
in Section~M-\ref{main-sec:simu}.

The cross validation procedure selected $q$ to be 1, 2 or 3 in
respectively 34\%, 53\% or 13\% of the cases. The value $\lambda=0$
was chosen in 64\% of the cases implying no lasso regularization in
these cases. As discussed in the first paragraph of Section~M-\ref{main-sec:simstudy1} values of $q$ up to 3 yield fairly parsimonius parametrizations
of the in total 15 PCFs and cross PCFs.

Figure~\ref{fig:simpcf} shows means and 95\% probability intervals for
the semi- and non-parametric estimates for a selection of PCFs and
cross PCFs.
\begin{figure}[ht!]
\centering
\begin{tabular}{c c c}
\includegraphics[width=0.33\textwidth]{figures/pcf_all_11}
\includegraphics[width=0.33\textwidth]{figures/pcf_all_15}
\includegraphics[width=0.33\textwidth]{figures/pcf_all_34}
\end{tabular}
\caption{(true $q=0$) Blue, red and green solid lines indicate
  pointwise means of estimates for selected cross PCFs using  MIN,
  simple, or Diggle's method. The dotted lines indicate the corresponding $95 \%$ pointwise probability intervals. Black solid lines indicate true cross PCFs.} \label{fig:simpcf}
\end{figure}
Not surprisingly, given the model misspecification, some bias is
present for the semi-parametric estimates. The semi-parametric
estimates nevertheless seem to capture the behaviour of the true PCFs
and cross PCFs reasonably well. Moreover, even stronger bias seem to
occur for the non-parametric methods, and especially so for the simple
method. Also the sampling variability of the simple approach seems
markedly larger than for the semi-parametric estimates.

Table~\ref{tbl:simmise} shows MISE aggregated over all PCFs and cross
PCFs ($\text{MISE}_{\text{total}}$), MISE for PCFs
($\text{MISE}_{\text{within}}$), and MISE for cross PCFs
($\text{MISE}_{\text{between}}$). In accordance with the observations
from Figure~\ref{fig:simpcf}, the MISEs for the simple non-parametric method
are 6-8 times larger than for the semi-parametric method. Diggle's
method works better than the simple method but still with MISEs up to
2 times bigger than for the semi-parametric method.
\begin{table}
\caption{MISE using $(q_{\min},0)$, simple, or
  Diggle's non-parametric approach when point patterns are simulated
  from a product shot-noise Cox process.}\label{tbl:simmise}
\centering
\begin{tabular}{| l | c c c |}
\hline
  & $(q_{\min},\lambda)$ & simple & Diggle\\ \hline
  $\text{MISE}_{\text{total}}$ & 5.7$\cdot 10^{-4}$ & 3.7$\cdot 10^{-3}$ &9.1$\cdot 10^{-4}$ \\
$\text{MISE}_{\text{within}}$ & 1.3$\cdot 10^{-3}$ & 8.0$\cdot 10^{-3}$ & 2.0$\cdot 10^{-3}$ \\
$\text{MISE}_{\text{between}}$ & 2.0$\cdot 10^{-4}$ & 1.5$\cdot 10^{-3}$ & 3.7$\cdot 10^{-4}$ \\
  \hline
\end{tabular}
\end{table}

\section{Analysis for tumor cells}\label{app:tumor}
For the tumor cells, we use 80\% independent thinning of the points but
otherwise proceed precisely as in Section~M-\ref{main-sec:stroma} to which
we refer for details. We choose Normoxic
as the baseline and estimate
$\bbeta=(\beta_{\textup{Hyp}},\beta_{\textup{Nor}})^\T$ by
$\hat{\beta}_{\textup{Hyp}}= \textup{log}(2346/3693)=-0.45$ and
$\hat{\beta}_{\textup{Nor}}=0$.
\begin{figure}[ht!]
\centering
\begin{tabular}{c c}
\includegraphics[scale=0.16]{figures/cv_tumor.pdf}
\hspace{0.5cm}
\includegraphics[scale=0.3]{figures/lambda0_tumor.png}
\end{tabular}
\caption{Left: CV-scores (minus minimum CV-score) with standard errors. Right: Non-parametric estimate of $\rho_0$ with bandwidth = 153.3.} \label{fig:cvtumor}
\end{figure}
 The left panel in Figure~\ref{fig:cvtumor} shows that the cross
 validation score is minimized for $q=0$. Hence, we model the
 bivariate point process as two independent LGCPs. The right panel in
 Figure~\ref{fig:cvtumor} shows the non-parametric estimate of
 $\rho_0$. 

The parameter estimates with $q=0$ are $\hat
 \sigma_{\text{Hyp}}=1.45$, $\hat \sigma_{\text{Nor}}=1.31$, $\hat
 \phi_{\text{Hyp}}=66.1$, and $\hat \phi_{\text{Nor}}=46.4$.
The estimates of $\sigma_{\text{Hyp}}$ and $\sigma_{\text{Nor}}$
show that in addition to the variation caused by $\rho_0$, both the
Hypoxic and Normoxic cells are highly clustered with strongest
clustering for Hypoxic. This is also
illustrated by the fitted PCFs in the left panel of Figure~\ref{fig:tumorplot}.
\begin{figure}[ht!]
\centering
\begin{tabular}{c c}
\includegraphics[scale=0.25]{figures/all_pcfs.pdf}
\includegraphics[scale=0.25]{figures/cross_pcf_ratio_supp.pdf}
\end{tabular}
\caption{Left: estimated (cross) PCFs  using the
  semi-parametric model (solid) and simple approach
  (dashed). Right: estimated (cross) PCF ratio using
  semi-parametric model (solid), simple approach (dashed) and consistent approach (dotted).} \label{fig:tumorplot}
\end{figure}


Figure~\ref{fig:tumorplot} shows that the agreement between the
semi-parametric and consistent non-parametric estimates of cross
PCF ratios $g_{\text{Hyp}}/g_{\text{Nor}}$ and
$g_{\text{Hyp,Nor}}/g_{\text{Nor}}$ is very good. This is confirmed by global envelope $p$-values of 0.98 in case of
$g_{\text{Hyp}}/g_{\text{Nor}}$ and 0.178 for
$g_{\text{Hyp,Nor}}/g_{\text{Nor}}$, see also the global
envelope plots in Section~\ref{app:get}. Further comments on model
assessment using the inhomogeneous cross $J$-function
\citep{cronie:lieshout:16} are given in
Section~\ref{sec:jinhom}. The conclusions regarding the simple non-parametric estimates shown
in Figure~\ref{fig:tumorplot} (left) are completely analogous to those for
Stroma and CD8: the non-parametric estimates seem biased and the simple
non-parametric estimates of cross PCF ratios deviate more from the
consistent non-parametric estimates than the semi-parametric
estimates (Figure~\ref{fig:tumorplot}, right).

\section{Model assessment for lymphoma data using cross PCF ratios}\label{app:get}

For the lymphoma data, global envelopes for
ratios of cross PCFs are shown in Figures~\ref{fig:diffratiocd8}-\ref{fig:diffratiotumor}.

\begin{figure}[ht!]
\centering
\begin{tabular}{c c}
\includegraphics[scale=0.35]{figures/diff_pcf_ratio.pdf}
\hspace{0.1cm}
\includegraphics[scale=0.35]{figures/diff_crosspcf_ratio.pdf}
\end{tabular}
\caption{Differences between semi-parametric and consistent non-parametric estimates of cross PCF ratios (solid curves) with global 95\% envelopes (gray shaded areas). Left: $g_{\textup{Str}}/g_{\textup{CD8}}$. Right: $g_{\textup{Str},\textup{CD8}}/g_{\textup{CD8}}$.} \label{fig:diffratiocd8}
\end{figure}

\begin{figure}[ht!]
\centering
\begin{tabular}{c c}
\includegraphics[scale=0.35]{figures/diff_pcf_ratio_tumor.pdf}
\hspace{0.1cm}
\includegraphics[scale=0.35]{figures/diff_crosspcf_ratio_tumor.pdf}
\end{tabular}
\caption{Differences between semi-parametric and consistent non-parametric estimates of cross PCF ratios (solid curves) with global 95\% envelopes (gray shaded areas). Left: $g_{\textup{Hyp}}/g_{\textup{Nor}}$. Right: $g_{\textup{Nor},\textup{Hyp}}/g_{\textup{Nor}}$.} \label{fig:diffratiotumor}
\end{figure}

\section{Model assessment for crime data using cross PCF ratios}\label{app:getcrime}

For the crime data, representative examples of global envelopes for
ratios of cross PCFs are shown in Figures~\ref{fig:diffratiocrime1}-\ref{fig:diffratiocrime2}.
\begin{figure}[ht!]
\centering
\begin{tabular}{c c}
\includegraphics[scale=0.35]{figures/diff_get_1.pdf}
\hspace{0.1cm}
\includegraphics[scale=0.35]{figures/diff_get_2.pdf}
\end{tabular}
\caption{Differences between semi-parametric and consistent non-parametric estimates of cross PCF ratios (solid curves) with global 95\% envelopes (gray shaded areas). Left: $g_{1}/g_{6}$. Right: $g_{2}/g_{6}$.} \label{fig:diffratiocrime1}
\end{figure}

\begin{figure}[ht!]
\centering
\begin{tabular}{c c}
\includegraphics[scale=0.35]{figures/cross_diff_ratio_13.pdf}
\hspace{0.1cm}
\includegraphics[scale=0.35]{figures/cross_diff_ratio_16.pdf}
\end{tabular}
\caption{Differences between semi-parametric and consistent non-parametric estimates of cross PCF ratios (solid curves) with global 95\% envelopes (gray shaded areas). Left: $g_{13}/g_{6}$. Right: $g_{16}/g_{6}$.} \label{fig:diffratiocrime2}
\end{figure}


\section{Model assessment using inhomogeneous cross
    $J$-function}\label{sec:jinhom}

As a supplement to the model assessment using non-parametric estimates
of cross PCF ratios, we also considered the inhomogeneous cross
$J$-function \citep{cronie:lieshout:16} which can be computed in terms
of the inhomogeneous cross $G$ and $F$-functions. Estimates of the
inhomogeneous $G$ and $F$ are implemented
in the \texttt{R} package \texttt{spatstat}
\citep{baddeley:rubak:turner:15}. The $G$ and $F$-function estimates
require estimates of the intensity functions $\rho_i$ and we here use
the leave-one-type out estimate \eqref{eq:hatrhoi} to mitigate issues
with confounding of random clustering and variation in $\rho_0$. 

For both the lymphoma data and the crime data the results are
mixed. For $J_{\textup{Str},\textup{CD8}}$,
$J_{\textup{CD8},\textup{Str}}$, $J_{\textup{Hyp},\textup{Nor}}$ and
$J_{\textup{Nor},\textup{Hyp}}$, we obtain global envelope test
$p$-values 0.28, 0.40, 0.34 and 0.03. Plots for $J_{\textup{Hyp},\textup{Nor}}$ and
$J_{\textup{Nor},\textup{Hyp}}$ are shown in Figure~\ref{fig:jinhomlymph}.
\begin{figure}[ht!]
\centering
\begin{tabular}{c c}
\includegraphics[scale=0.35]{figures/J_estimate_12}
\hspace{0.1cm}
\includegraphics[scale=0.35]{figures/J_estimate_21}
\end{tabular}
\caption{Tumor cells data. Global envelope plots for inhomogeneous cross $J$-functions  (solid curves) with global 95\% envelopes (gray shaded areas). Left: $J_{\textup{Hyp},\textup{Nor}}$. Right: $J_{\textup{Nor},\textup{Hyp}}$.} \label{fig:jinhomlymph}
\end{figure}

Overall the inhomogeneous cross $J$-results for the lymphoma data do
not seem to provide serious evidence against the fitted model.

For the crimes data we obtain in total 30 global envelope test
$p$-values summarized in Table~\ref{tbl:pvalues}. Examples of two global envelopes
plots (for $J_{12}$ and $J_{15}$) are shown in Figure~\ref{fig:crossJcrime}.
 \begin{table}
\caption{\label{tbl:pvalues}Distribution of global envelope $p$-values for 30 cross  inhomogeneous $J$-functions}
 \centering
\fbox{
 \begin{tabular}{l | c  c c c  c}
 $p$-value & 0.01 & 0.02 & 0.03 & 0.04 & $>$0.05 \\ \hline
 frequency & 8 & 3 &  2 & 2 & 15
 \end{tabular}}
 \end{table}
\begin{figure}[!htb]
\centering
\begin{tabular}{cc}
\includegraphics[width=0.45\textwidth]{figures/J_estimate_12_crime} &                                                       \includegraphics[width=0.45\textwidth]{figures/J_estimate_15_crime} 
\end{tabular}
\caption{Global envelope plots for inhomogeneous $J_{12}$ (left) and
  $J_{15}$ (right).}\label{fig:crossJcrime}
\end{figure}

The large fraction of small $p$-values (15 out of 30 less than
5\%) raises doubt about the fitted model for the crimes
data. 

In each
case we obtain a small $p$-value, the estimated $J$-function falls below the
simulated $J$-functions (as in the right plot of
Figure~\ref{fig:crossJcrime}) indicating stronger clustering for the data
than for the simulations. However, it is difficult to exactly pinpoint
what is the cause of the apparent model deficiency. This is because the results depend
both on the estimate of $\rho_0$ and the fitted multivariate LGCP
structure not depending on $\rho_0$. The non-parametric kernel
estimate in the right
plot of Figure~11 is our best suggestion for a data-driven estimate
of $\rho_0$. However, this estimate is most likely
too smooth compared to the true $\rho_0$ that for the crime data depends on the complex urban structure and population density. Hence
it is likely that the resulting cross $J$-function obtained for the
data overestimates clustering by picking up
variation due to $\rho_0$. On the other hand, this does not happen for
the simulated data, where the smoothness of the $\rho_0$ used for the
simulations likely agrees better with the smoothness of the $\rho_0$ estimates
subsequently obtained from the simulated data. In other words, we do
not know whether the low global envelope $p$-values are due to misspecification of
the model for cross-interactions or due to using a wrong estimate of
the intensity function.

Our model assessment using  non-parametric estimates of
cross PCF ratios is less prone to the above issues since these
estimates do not require knowledge of $\rho_0$.

\bibliography{../bib/mybib}
%
\bibliographystyle{../jrssc_format/rss}